\documentclass[conference]{IEEEtran}
\IEEEoverridecommandlockouts
\usepackage{amsmath,amssymb,amsthm}
\usepackage{algorithm}
\usepackage{algorithmic}
\usepackage{graphicx}
\usepackage[caption=false,font=footnotesize]{subfig}
\usepackage{cite}
\usepackage{url}
\usepackage{booktabs}
\newtheorem{theorem}{Theorem}
\newtheorem{lemma}{Lemma}
\newtheorem{proposition}{Proposition}
\newtheorem{corollary}{Corollary}
\newtheorem{definition}{Definition}
\newtheorem{assumption}{Assumption}
\newtheorem{remark}{Remark}

\title{Delay-Adaptive Speculation Control for Low-Latency Edge-Cloud LLM Inference}

\author{
\IEEEauthorblockN{\parbox[t]{0.96\linewidth}{\centering
Kangkang Sun, \IEEEmembership{Member,~IEEE}, Jianhua Li, \IEEEmembership{Senior Member,~IEEE}, Xiuzhen Chen, \IEEEmembership{Member,~IEEE}, Junyi He, \IEEEmembership{Member,~IEEE}, and Minyi Guo, \IEEEmembership{Fellow,~IEEE}}}
\thanks{Kangkang Sun, Jianhua Li, Xiuzhen Chen and Minyi Guo are
with the Shanghai Key Laboratory of Integrated Administration Technologies for Information Security, School of Computer Science, Shanghai Jiao Tong University, Shanghai 200240, China (e-mail: szpsunkk@sjtu.edu.cn; lijh888@sjtu.edu.cn; chenxz@sjtu.edu.cn; guo-my@cs.sjtu.edu.cn).}
\thanks{Junyi He is  School of Science and Engineering, The Chinese University of Hong Kong, Shenzhen, Guangdong, China (e-mail: junyihe@link.cuhk.edu.cn).}
\thanks{\textit{The corresponding author is Jianhua Li.}}

\newcommand{\submissionjournal}{an IEEE journal}
\newcommand{\arxivurl}{}
\thanks{This manuscript has been submitted to \submissionjournal\ for possible publication. Copyright may be transferred without notice, after which this version may no longer be accessible.}%
}

\begin{document}
\maketitle

\begin{abstract}
Speculative decoding can accelerate large language model (LLM) inference by
letting a lightweight draft model generate candidate tokens that a larger target
model verifies in parallel. In distributed edge-cloud inference, however,
selecting the draft length becomes a challenging \emph{runtime control}
problem: longer drafts amortize communication delay but suffer decaying token
acceptance, whereas shorter drafts maintain higher acceptance at the cost of more
frequent communication rounds. We formulate this tradeoff as a ratio-type
optimal stopping problem and show that the optimal draft length is a finite
delay-monotone threshold. Our analysis further reveals a sharp phase transition
at a critical delay below which single-token speculation is always optimal, and
establishes that the optimal draft length grows only logarithmically with
communication delay. For time-varying networks, we extend the formulation to
Markov-modulated channels and show that, under a bounded speculation horizon
$K_{\max}$ and monotone stopping-region conditions, the optimal policy is a
state-dependent threshold. For unknown environments, we propose UCB-SpecStop,
an online control algorithm; with a sufficiently large exploration parameter
$\beta$, it satisfies gap-free and gap-dependent \emph{expected} regret bounds of order
$O\!\bigl(L_{\max}\sqrt{K_{\max} T \log(K_{\max} T)}\bigr)$ and
$O\!\bigl(\sum_{k:\Delta_k>0} L_{\max}^2\log(K_{\max} T)/\Delta_k\bigr)$, respectively.
We implement and evaluate the approach on a real edge-cloud testbed using an
NVIDIA Jetson Orin Nano Super as the edge node and an RTX~3090 Ti as the cloud
node with Qwen and Llama draft--target pairs. Experiments confirm the
predicted qualitative phase-transition behavior, with measured transition
points around 83~ms and 111~ms for the two model suites. The Qwen transition
closely matches the geometric prediction, while the LLaMA transition is
better explained after empirical-prefix calibration due to its heavy-head
acceptance profile. Across the tested delay grid, UCB-SpecStop reduces per-token latency over SpecDec++ by up to 22.4\% (Qwen at 20\,ms); in communication-dominated regimes, it approaches an offline best-fixed-arm empirical oracle within 0.2--2.4\%.
It also improves over a naive UCB baseline by up to 7.5\%,
removes the 14.0--18.7\% performance gap incurred by static draft-length tuning
under delay drift, while a contextual extension exploiting channel-state
information provides an additional 3.0--6.8\% gain.
\end{abstract}

\begin{IEEEkeywords}
Edge-cloud LLM inference, speculative decoding, online control, distributed inference
\end{IEEEkeywords}

\section{Introduction}
\label{sec:intro}

Large language models (LLMs) are increasingly being deployed in
latency-sensitive edge applications such as mobile assistants, on-device
agents, and interactive inference services, where response delay directly
affects user experience \cite{qu2025mobile,zeng2025h2o, xu2025joint}. Speculative
decoding~\cite{leviathan2023fast,chen2023accelerating} has emerged as an
effective way to reduce decoding latency by allowing a lightweight
\textit{draft model} to propose multiple tokens that a larger \textit{target
model} verifies in parallel. While this mechanism is often studied in centralized
settings, an increasing number of practical deployments place the draft model
on an edge device and the target model on a remote cloud server, making
communication a first-order factor in inference performance \cite{xu2024edgellm}.

This deployment trend is closely related to the broader mobile and edge
computing literature on collaborative intelligence, model partitioning, and
device--cloud co-inference. Prior work has shown that distributing deep neural
network execution across devices and servers can reduce latency and resource
pressure when communication and computation are jointly considered, as in
Neurosurgeon~\cite{neurosurgeon}, JointDNN~\cite{jointdnn}, and distributed DNN
execution over cloud--edge--end
hierarchies~\cite{ddnn}. Recent large-model inference systems such as
Splitwise~\cite{patel2024splitwise} and DistServe~\cite{zhong2024distserve}
further demonstrate that disaggregation and communication-aware scheduling are
important for efficient LLM inference. However, these studies do not answer a
key question specific to distributed speculative decoding: how many draft
tokens should be generated before each verification round under stochastic
network conditions.

This question is non-trivial because draft length directly determines a
system-level tradeoff. A longer draft amortizes round-trip communication delay
across more tokens, but the probability that all drafted tokens are accepted
decreases rapidly with depth, leading to wasted edge computation and
verification work. A shorter draft preserves higher acceptance, but increases
the number of communication rounds and becomes inefficient when network delay is
large or time-varying. As a result, a fixed draft length that works well in one
operating regime can become highly suboptimal when delay conditions drift. While
recent systems work~\cite{sled2025,flexspec2026,configspec2026,venkatesha2025fastedge}
provides engineering evidence and profiling-based heuristics for
communication-aware speculation, a closed-form structural characterization of
the optimal draft length under stochastic communication, together with provable
regret guarantees for an online learning variant, has not been established.

This paper addresses the following key questions:
\begin{itemize}
    \item \textit{Q1: What is the structure of the optimal speculation length
    under communication constraints?} Does an optimal threshold exist, and how
    does it depend on delay?
    \item \textit{Q2: How should the strategy adapt to time-varying network
    conditions?} Can we characterize the value of observing the network state?
    \item \textit{Q3: How can we learn the optimal strategy online when system
    parameters are unknown?}
\end{itemize}

We study distributed speculative decoding as an \emph{online control} problem
for edge-cloud LLM inference. We formalize draft-length selection as a
ratio-type optimal stopping problem~\cite{ferguson2006optimal} that supplies an
analytical foundation for runtime control: at each draft step, the agent decides
whether to continue generating tokens (diminishing marginal returns) or stop
and transmit (incurring stochastic communication cost). This viewpoint enables
structural analysis using sequential decision theory and lattice optimization,
while the theoretical results in Sections~\ref{sec:theory}--\ref{sec:online}
support concrete system design. The main contributions are as follows.

\begin{itemize}
    \item \textit{Communication-aware control formulation.} We formulate
    draft-length selection in distributed speculative decoding as a ratio-type
    optimal stopping problem that explicitly captures the tradeoff between edge
    computation, cloud verification, and stochastic communication delay
    (\S\ref{sec:model}).

    \item \textit{Structural design insights.} We prove that the optimal draft
    length follows a finite delay-monotone threshold policy
    (Theorem~\ref{thm:monotonicity}), identify a critical phase transition below
    which single-token speculation is always optimal
    (Theorem~\ref{thm:phase_transition}), and show that the optimal draft length
    grows only logarithmically with delay (\S\ref{sec:theory}).

    \item \textit{Online runtime adaptation.} We develop UCB-SpecStop, an
    online control algorithm based on a ratio-of-sums estimator aligned
    to~\eqref{eq:arm_cost}, and establish both a gap-dependent logarithmic
    \emph{expected} regret bound and a gap-free
    $O\!\bigl(L_{\max}\sqrt{K_{\max} T\log(K_{\max} T)}\bigr)$ bound when
    system parameters are unknown, for sufficiently large $\beta$
    (Theorem~\ref{thm:regret}; \S\ref{sec:online}).

    \item \textit{State-aware extension.} For time-varying networks, we extend
    the formulation to Markov-modulated channels and show that, under a bounded
    speculation horizon and monotone conditions on the Dinkelbach stopping
    region (Proposition~\ref{thm:markov}), network-state awareness enables a
    state-dependent threshold policy with measurable value-of-information gains
    under the ratio-of-expectations objective (Theorem~\ref{thm:voi};
    \S\ref{sec:theory}--\ref{sec:online}).

    \item \textit{End-to-end edge-cloud validation.} We implement the full
    pipeline on a real edge-cloud testbed and show that the proposed method
    achieves near-oracle performance, outperforms heuristic baselines, and
    avoids the mismatch cost of a single static draft length under delay drift
    (\S\ref{sec:experiments}). We also outline a calibration procedure that maps
    the idealized model to measured round-trip delay and empirical prefix
    acceptance curves.
\end{itemize}

The remainder of this paper is organized as follows. 
Section~\ref{sec:related} reviews related work. 
Section~\ref{sec:model} presents the system model and optimal stopping 
formulation. Section~\ref{sec:theory} develops the main theoretical 
results. Section~\ref{sec:online} introduces the online learning 
algorithm. 
Section~\ref{sec:experiments} reports end-to-end hardware validation
on an edge and cloud testbed.
Section~\ref{sec:conclusion} concludes the paper.

\section{Related Work}
\label{sec:related}

This section reviews four strands of related work: speculative decoding,
collaborative mobile-edge inference and distributed LLM inference, optimal
stopping theory, and communication-aware speculative decoding. For each area, we
highlight the gap that our work addresses.

\subsection{Speculative Decoding}

Speculative decoding has been extensively studied as a lossless
acceleration technique for LLM inference. The draft-and-verify paradigm
traces back to Stern et al.~\cite{stern2018blockwise}, who proposed
blockwise parallel decoding that predicts multiple future positions and
validates the longest acceptable prefix. Leviathan
et al.~\cite{leviathan2023fast} and Chen
et al.~\cite{chen2023accelerating} formalized speculative decoding with
rejection sampling that provably preserves the target distribution.
Subsequent work improves the draft mechanism along several dimensions:
SpecInfer~\cite{miao2024specinfer} introduces tree-structured
speculation; EAGLE~\cite{li2024eagle} leverages feature-level
uncertainty for confidence-based adaptive drafting;
Medusa~\cite{cai2024medusa} replaces the separate draft model with
multiple decoding heads attached to the target model; and
REST~\cite{he2024rest} uses retrieval to construct draft sequences.
Online Speculative Decoding~\cite{liu2024online} adapts the draft model
itself to improve acceptance rates over time.

Most centralized speculative decoding methods treat communication
overhead as negligible; recent edge-cloud extensions (reviewed
in \S\ref{subsec:related_commaware}) introduce network awareness
primarily through profiling and engineering heuristics. Our work
complements these systems by providing a closed-form structural
analysis of the draft-length tradeoff and regret guarantees for
online learning under a simplified stochastic delay model.

\subsection{Distributed LLM Inference}

Beyond speculative decoding itself, our setting connects to collaborative
intelligence and mobile-edge inference. Prior studies split or offload deep
neural network execution across end devices, edge servers, and the cloud to
reduce latency and energy consumption under resource constraints, including
Neurosurgeon~\cite{neurosurgeon}, JointDNN~\cite{jointdnn}, and distributed DNN
execution over cloud--edge--end
hierarchies~\cite{ddnn}. These lines of work established the importance of
communication-aware model partitioning and runtime adaptation in mobile
systems.

Distributed LLM inference systems have emerged to improve hardware 
utilization by disaggregating inference phases across machines, but none 
provides theoretical guarantees for speculation strategy under 
communication uncertainty. Splitwise~\cite{patel2024splitwise} 
disaggregates the prefill and decode phases across machines to improve 
hardware utilization. 
DistServe~\cite{zhong2024distserve} further optimizes goodput by 
co-locating complementary workloads. These systems account for 
inter-machine communication at the engineering level.

While these systems address communication overhead 
implicitly through scheduling heuristics, they do not employ speculative 
decoding or provide theoretical guarantees for inference strategy 
selection under uncertain communication. Our work bridges speculative 
decoding theory with distributed inference by explicitly modeling 
communication randomness. Rather than deciding where to place model layers
or phases, we study how to dynamically control speculative depth, with both
structural properties and online learning guarantees for that control problem.

\subsection{Optimal Stopping Theory}

Optimal stopping theory provides a mature mathematical framework for 
sequential decisions under uncertainty, but has not been applied to the 
speculative decoding setting. Classical foundations~\cite{chow1971great, 
ferguson2006optimal} include Wald's sequential 
analysis~\cite{wald1947sequential} for hypothesis testing with sequential 
observations. Modern applications include optimal search, secretary 
problems, and financial option pricing. Topkis~\cite{topkis1978} 
established monotone comparative statics for optimization problems with 
supermodular structure.

The specific structure of 
our problem, namely a ratio objective with exponentially decaying denominator 
gains and linearly growing numerator costs, does not reduce to standard 
stopping problems and requires tailored analysis.

\subsection{Communication-Aware and Adaptive Drafting}
\label{subsec:related_commaware}

A growing body of work explicitly addresses speculative decoding
in edge-cloud or communication-constrained settings.
SLED~\cite{sled2025} frames speculative decoding as an
edge and server orchestration problem with dynamic drafting and
timeouts under network uncertainty.
Venkatesha et al.~\cite{venkatesha2025fastedge} build a real
edge-cloud speculative decoding system with early exits and
preemptive drafting, demonstrating cost-effective deployment.
Different from these primarily system-driven designs, we emphasize
adaptive draft-length control with closed-form structural analysis and an
online algorithm with regret guarantees under a simplified stochastic delay
model.
FlexSpec~\cite{flexspec2026} introduces channel-aware adaptive
speculation with frozen drafts and evolving targets.
ConfigSpec~\cite{configspec2026} profiles configurations
including speculative length $K^*$, draft model choice,
quantization, and device placement for distributed serving.

On the algorithmic side, SpecDec++~\cite{huang2024specdecpp}
uses a learned acceptance predictor to adapt candidate lengths;
its probability threshold can be viewed as an empirical analogue
of our marginal-cost crossing condition
(Corollary~\ref{cor:closed_form}), specialized to
content-dependent acceptance.
TETRIS~\cite{tetris2025} studies batch speculative decoding and
shows that effective draft depth interacts with batching and
hardware saturation; our framework could incorporate this by
letting verification cost depend on batch size $c_v(b)$, which
we leave for future work.
Batch speculative decoding with correctness
guarantees~\cite{batchspec2025} highlights synchronization
overheads from ragged acceptance in batched settings.

Our work differs from these systems in providing a closed-form
structural theory (threshold optimality, monotonicity, phase
transition, and logarithmic scaling) and regret guarantees for
online learning under a simplified stochastic-delay model. We
view our analytical results and these systems' empirical
evidence as \emph{complementary}: our closed-form $d_c$ and
$k^*(d)$ formulas give a fast configuration prior that systems
such as ConfigSpec~\cite{configspec2026} can refine through
profiling, while platforms such as
SLED~\cite{sled2025} provide
the deployment substrate on which our policies can be evaluated.


\section{System Model and Problem Formulation}
\label{sec:model}

This section formalizes the distributed speculative decoding problem.
Fig.~\ref{fig:system_overview} illustrates the system architecture and
the fundamental tradeoff that motivates our formulation. The edge device
hosts a lightweight draft model $M_d$ that generates $k$ candidate tokens,
which are then transmitted over an unreliable network link to a cloud
server hosting the target model $M_t$ for parallel verification. The
central challenge is selecting the draft length $k$ (\textit{computation-communication tradeoff}): a short draft
maintains high acceptance probability ($\alpha^k \approx 1$) but requires
frequent communication rounds, while a long draft amortizes the stochastic
communication cost $2D$ but suffers exponentially decaying acceptance
($\alpha^k \to 0$). As shown in the bottom of Fig.~\ref{fig:system_overview},
this tradeoff produces a U-shaped cost-per-token curve $C(k,d)$ whose
minimum defines the optimal draft length $k^*$.

\begin{figure}[t]
\centering
\includegraphics[width=0.8\columnwidth]{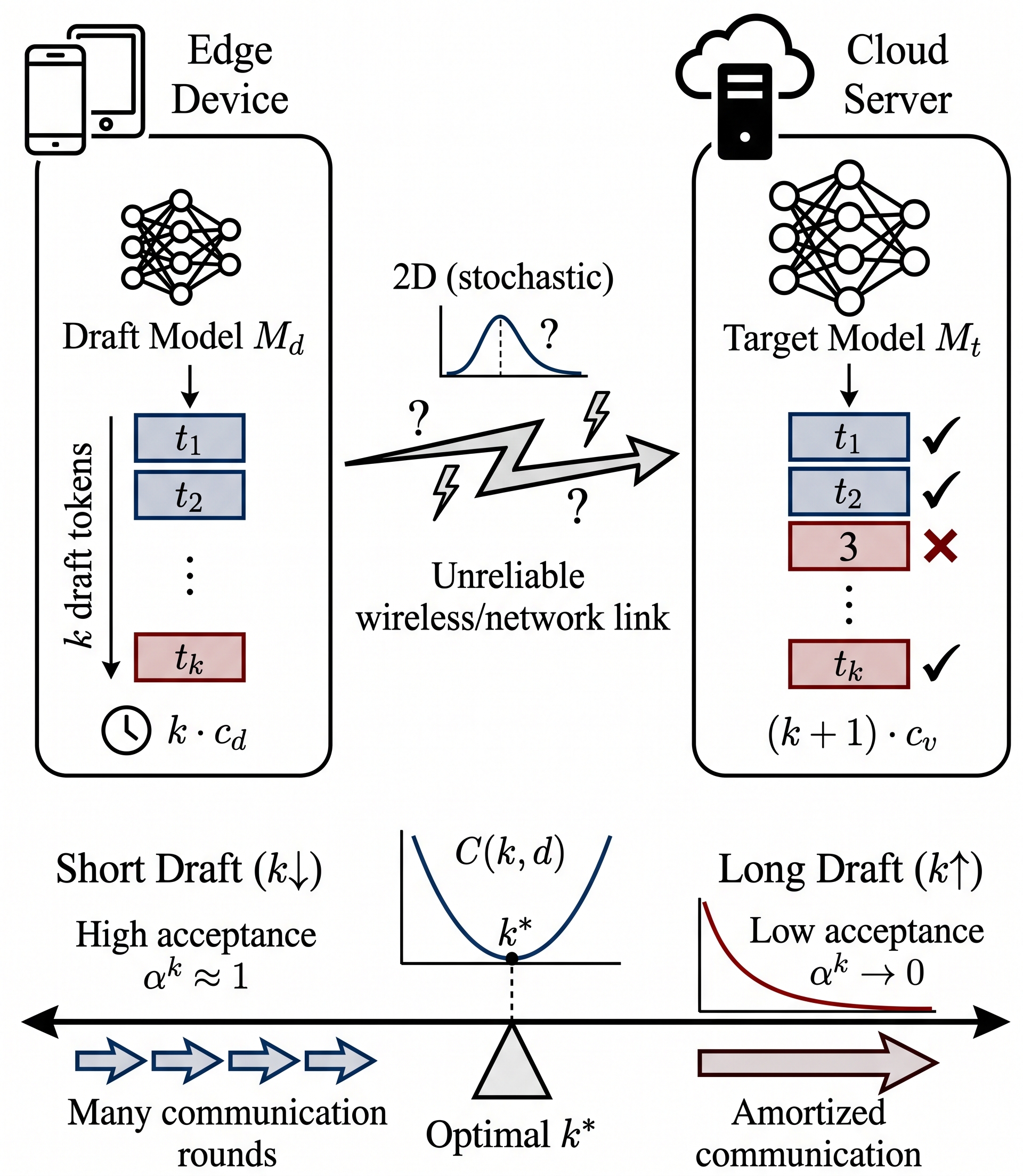}
\caption{Distributed speculative decoding architecture. 
}
\label{fig:system_overview}
\end{figure}

We formalize this architecture (\S\ref{subsec:arch}), introduce the token
acceptance model (\S\ref{subsec:acceptance}), define the cost-per-token objective
(\S\ref{subsec:objective}), and cast the problem as an optimal stopping problem
(\S\ref{subsec:osp}). The key result of this section is Definition~\ref{def:osp},
which establishes the formal framework for all subsequent theoretical analysis.

\subsection{Distributed Inference Architecture}
\label{subsec:arch}

We consider a two-tier distributed inference system illustrated 
conceptually as follows.

\textit{(1) Edge device.} Hosts the draft model $M_d$ (e.g., a 7B-parameter 
model). Each draft token requires $c_d > 0$ time units to generate.

\textit{(2) Cloud server.} Hosts the target model $M_t$ (e.g., a 
70B-parameter model). Verifying $k$ draft tokens and generating one 
bonus token requires $(k+1) \cdot c_v$ time units, where $c_v > 0$ is 
the per-token verification cost. Verification proceeds in parallel 
across all $k$ tokens~\cite{leviathan2023fast}.

\textit{(3) Communication channel.} The edge and cloud communicate over a 
channel with random one-way delay $D \geq 0$. We denote the round-trip 
cost as $2D$. The delay distribution may depend on an observable network 
state $s \in \mathcal{S}$.

\subsection{Token Acceptance Model}
\label{subsec:acceptance}

We model the draft-target acceptance process following the analysis 
in~\cite{leviathan2023fast}. Let $\alpha \in (0,1)$ denote the probability 
that a single draft token is accepted by the target model's rejection 
sampling scheme.

\begin{assumption}[Geometric Acceptance]
\label{asm:geometric}
Token acceptance events are conditionally independent across positions. 
The probability that all $k$ consecutive draft tokens are accepted is 
$\alpha^k$.
\end{assumption}

Assumption~\ref{asm:geometric} is standard in the speculative decoding 
literature~\cite{leviathan2023fast, chen2023accelerating} and captures 
the essential exponential decay structure. While real acceptance rates 
exhibit positional dependence~\cite{li2024eagle}, this model enables 
tractable closed-form analysis. 

Under Assumption~\ref{asm:geometric}, the expected number of accepted 
tokens (including the bonus token generated by the target model upon 
the first rejection) from a draft of length $k$ is:
\begin{equation}
\label{eq:expected_accepted}
    B(k) \triangleq \mathbb{E}[A(k)] = \frac{1 - \alpha^{k+1}}{1 - \alpha}
\end{equation}

We introduce the notation $B(k)$ for conciseness throughout the paper.

\subsection{Cost-Per-Token Objective}
\label{subsec:objective}

For a speculation round with draft length $k$, the 
total elapsed time is:
\begin{equation}
\label{eq:total_time}
    T(k, D) = \underbrace{k \cdot c_d}_{\text{draft generation}} + 
    \underbrace{2D}_{\text{round-trip comm.}} + 
    \underbrace{(k+1) \cdot c_v}_{\text{parallel verification}}
\end{equation}

We measure performance by the expected 
time per accepted token:
\begin{equation}
\label{eq:cost_per_token}
    C(k, d) = \frac{N(k,d)}{B(k)} = 
    \frac{k(c_d + c_v) + 2d + c_v}{(1 - \alpha^{k+1})/(1 - \alpha)}
\end{equation}
where $N(k,d) \triangleq k(c_d + c_v) + 2d + c_v$ denotes the total 
cycle cost and $d = \mathbb{E}[D]$ in the deterministic approximation. 
The goal is to find the draft length $k$ minimizing $C(k,d)$.

\subsection{Optimal Stopping Formulation}
\label{subsec:osp}

We now cast draft length selection as a sequential decision problem.

\begin{definition}[Optimal Stopping for Speculative Decoding]
\label{def:osp}
At each discrete step $n = 1, 2, 3, \ldots$, the agent has generated 
$n$ draft tokens. The agent observes the network state $s_n \in 
\mathcal{S}$ and selects an action:
\begin{itemize}
    \item \textit{Continue} ($a_n = 0$): Generate token $n{+}1$, 
    incurring cost $c_d$.
    \item \textit{Stop} ($a_n = 1$): Transmit $n$ tokens for 
    verification, incurring communication cost $2D(s_n)$ and 
    verification cost $(n{+}1) c_v$.
\end{itemize}
The objective is to find a stopping rule $\tau$ minimizing:
\begin{equation}
\label{eq:objective}
    \min_{\tau} \;\; \frac{\mathbb{E}\bigl[\tau \cdot c_d + 2D(s_\tau) + 
    (\tau+1) \cdot c_v\bigr]}
    {\mathbb{E}\bigl[B(\tau)\bigr]}
\end{equation}
\end{definition}

This is a \textit{ratio-type} optimal stopping 
problem~\cite{ferguson2006optimal}. The standard approach converts it 
to an additive problem via Lagrange multipliers: for a given target 
rate $\lambda$, find $\tau$ minimizing 
$\mathbb{E}[N(\tau, D) - \lambda \cdot B(\tau)]$, then search for 
$\lambda^*$ such that the constraint binds.

In summary, we have established a precise mathematical formulation 
(Definition~\ref{def:osp}) that captures the 
computation and communication tradeoff in distributed speculative decoding. 
The next section derives structural properties of the optimal solution.

\section{Theoretical Results}
\label{sec:theory}

This section presents our main theoretical contributions. We proceed in 
three stages of increasing generality: deterministic delay 
(\S\ref{subsec:deterministic}), stochastic delay 
(\S\ref{subsec:stochastic_delay}), and Markov-modulated channels 
(\S\ref{subsec:markov}). We then present the paper's most surprising 
result, a phase transition and logarithmic scaling law 
(\S\ref{subsec:phase}), followed by a characterization of the value of 
network state information (\S\ref{subsec:voi}). The key takeaways are:
\begin{itemize}
    \item The optimal draft length $k^*$ always exists and is finite 
    (Theorem~\ref{thm:existence}).
    \item $k^*$ is monotonically non-decreasing in delay 
    (Theorem~\ref{thm:monotonicity}).
    \item $k^*$ grows only logarithmically with delay 
    (Theorem~\ref{thm:phase_transition}).
\end{itemize}

\subsection{Deterministic Communication Delay}
\label{subsec:deterministic}

We begin with the simplest setting: the communication delay is a known 
constant $d \geq 0$. This baseline analysis reveals the fundamental 
structure that persists in more general settings.

\begin{theorem}[Existence and Finiteness]
\label{thm:existence}
Under Assumption~\ref{asm:geometric} with $\alpha \in (0,1)$, 
$c_d > 0$, $c_v \geq 0$, and $d \geq 0$, the cost function $C(k,d)$ 
achieves its minimum at a finite $k^* \in \mathbb{N}^+$.
\end{theorem}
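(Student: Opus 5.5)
The argument is a coercivity bound: $C(k,d)$ grows without bound in $k$ while its value at $k=1$ is finite. The minimization then reduces to a finite set of candidates, where a minimum trivially exists.

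First, using \eqref{eq:expected_accepted}, the denominator is uniformly bounded:
\[
1 = B(0) \le B(k) = \frac{1-\alpha^{k+1}}{1-\alpha} < \frac{1}{1-\alpha}
\]
for all $k \ge 0$, since $\alpha\in(0,1)$. Hence $C(k,d) \ge (1-\alpha)\,N(k,d)$. The numerator satisfies
\[
N(k,d) = k(c_d+c_v) + 2d + c_v \ge k\,c_d ,
\]
because $c_v\ge 0$ and $d\ge 0$. This gives the lower bound
\[
C(k,d) \ge (1-\alpha)\,c_d\,k,
\]
which tends to $+\infty$ as $k\to\infty$ because $c_d>0$.

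Second, take the reference value $C_1 \triangleq C(1,d) = (c_d+2c_v+2d)/(1+\alpha)$, which is finite. Define
\[
K_0 \triangleq \Bigl\lceil \frac{C_1}{(1-\alpha)c_d} \Bigr\rceil + 1 .
\]
Every $k > K_0$ then satisfies $C(k,d) > C_1$. The infimum of $C(\cdot,d)$ over $\mathbb{N}^+$ therefore equals its minimum over the finite set $\{1,\dots,K_0\}$. A minimum over a finite nonempty set is attained, so some finite $k^*\in\{1,\dots,K_0\}$ achieves it. If there are ties, we take the smallest minimizer, which fixes $k^*$ as a well-defined threshold for later monotonicity results. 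This also yields an explicit bound $k^*\le K_0$ that grows at most linearly in $d$; Theorem~\ref{thm:phase_transition} later sharpens this to logarithmic growth.

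There is essentially no hard step here. The only point needing care is the domain: the theorem asserts $k^*\in\mathbb{N}^+$, so we optimize over $k\ge1$ and do not need to compare against $k=0$. The case $c_v=0$ does not break the argument, because only $c_d>0$ is used for growth. As an optional remark, one could note that the forward difference $C(k+1,d)-C(k,d)$ changes sign at most once, by the convexity-type structure of $N/B$. That observation belongs with the later threshold results and is not needed for existence.
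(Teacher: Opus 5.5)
Your proposal is correct and follows the paper's own argument. Both bound $B(k)$ above by $1/(1-\alpha)$, use the linear growth of $N(k,d)$ to get $C(k,d)\to\infty$, and then restrict the minimization to a finite set $\{1,\dots,K_0\}$ on which the minimum is attained. Your explicit $K_0=\lceil C(1,d)/((1-\alpha)c_d)\rceil+1$ makes quantitative the step the paper leaves as a bare existence claim.
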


\begin{proof}
Since $\alpha\in(0,1)$, we have
\begin{equation}
    B(k)=\frac{1-\alpha^{k+1}}{1-\alpha}\le \frac{1}{1-\alpha}.
\end{equation}
Meanwhile,
\begin{equation}
    N(k,d)=k(c_d+c_v)+2d+c_v
\end{equation}
grows linearly in $k$. Hence $C(k,d)=N(k,d)/B(k)\to\infty$ as $k\to\infty$.
Therefore, there exists $K_0$ such that $C(k,d)>C(1,d)$ for all $k>K_0$.
The minimization over $\mathbb{N}^+$ can thus be restricted to the finite set
$\{1,\ldots,K_0\}$, where the minimum is attained.
\end{proof}

The finiteness of $k^*$ arises from the tension between two forces: 
the communication cost $2d$ favors large $k$ (amortization), while the 
exponential acceptance decay $\alpha^k$ penalizes large $k$ 
(diminishing returns). Neither force dominates asymptotically, and their 
balance determines $k^*$.

We now characterize how $k^*$ responds to changes in communication delay. 
The following result answers \textbf{Q1} from the Introduction.

\begin{theorem}[Delay Monotonicity]
\label{thm:monotonicity}
Let
\begin{equation}
    k^-(d)=\min\arg\min_{k\ge 1} C(k,d)
\end{equation}
be the smallest optimal draft length. Then $k^-(d)$ is non-decreasing in $d$.
Similarly, the largest selector
\begin{equation}
    k^+(d)=\max\arg\min_{k\ge 1} C(k,d)
\end{equation}
is non-decreasing in $d$.
\end{theorem}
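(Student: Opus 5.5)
Our plan is to prove the statement as a monotone comparative statics result in the spirit of Topkis~\cite{topkis1978}, working directly with the ratio $C(k,d)=N(k,d)/B(k)$. Because $B(k)>0$ does not depend on $d$, minimizing $C(\cdot,d)$ over $k$ is the same as minimizing $\log C(k,d)=\log N(k,d)-\log B(k)$. The whole argument therefore reduces to showing that $-\log N(k,d)$, or equivalently $-\log C(k,d)$, has increasing differences in $(k,d)$. Concretely, for $k'>k$ we would show that
\begin{equation}
\Phi(d)\triangleq \log C(k',d)-\log C(k,d)=\log\frac{N(k',d)}{N(k,d)}-\log\frac{B(k')}{B(k)}
\end{equation}
is non-increasing in $d$. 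Intuitively, raising the delay makes longer drafts relatively cheaper.

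\textbf{Step 1 (single crossing).} Write $a=c_d+c_v>0$, so that $N(k,d)=ka+c_v+2d$, which is affine in $d$ with slope $2$ for every $k$. The ratio $N(k',d)/N(k,d)$ has $d$-derivative
\begin{equation}
\frac{2\bigl[N(k,d)-N(k',d)\bigr]}{N(k,d)^2}=\frac{-2(k'-k)a}{N(k,d)^2}<0.
\end{equation}
Hence $\Phi$ is strictly decreasing in $d$. This strict decrease is stronger than increasing differences: it is strict single crossing.

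\textbf{Step 2 (standard exchange argument).} Take $d_1<d_2$, and suppose for contradiction that $k_2\triangleq k^-(d_2)<k_1\triangleq k^-(d_1)$. Optimality of $k_2$ at $d_2$ gives $C(k_2,d_2)\le C(k_1,d_2)$, that is, $\Phi(d_2)\ge 0$ for the pair $(k_2,k_1)$. Since $\Phi$ is strictly decreasing, $\Phi(d_1)>0$, so $C(k_1,d_1)>C(k_2,d_1)$. This contradicts the optimality of $k_1$ at $d_1$. Therefore $k^-(d_2)\ge k^-(d_1)$.

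For $k^+$ we argue symmetrically. Suppose $k^+(d_2)<k^+(d_1)$. Optimality of $k^+(d_1)$ at $d_1$ gives $\Phi(d_1)\le 0$, hence $\Phi(d_2)<0$. Then $C(k^+(d_1),d_2)<C(k^+(d_2),d_2)$, which contradicts the optimality of $k^+(d_2)$.

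Note that the strict inequality makes both selector claims work, and in fact shows that every selection is monotone. Theorem~\ref{thm:existence} guarantees that the argmin sets are nonempty and finite, so $k^\pm(d)$ are well defined.

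\textbf{Main obstacle.} The only delicate point is that the problem is a ratio rather than an additive objective. As a result, $C$ itself need not have increasing differences in $(k,d)$: $C(k',d)-C(k,d)$ changes with $d$ at rate $2(1/B(k')-1/B(k))$, which is negative, and so that route also works. We will adopt the cleaner ordinal route instead. The approach is to compare the sign of $C(k',d)-C(k,d)$, which equals the sign of $N(k',d)B(k)-N(k,d)B(k')$. This expression is affine in $d$ with slope $2(B(k)-B(k'))<0$, because $B$ is strictly increasing when $\alpha\in(0,1)$. Under this formulation the single-crossing property follows immediately, with no need to check log-supermodularity. We will present this affine-sign version as the final argument, and mention that it also covers $c_v\ge 0$ and the boundary case $d=0$.
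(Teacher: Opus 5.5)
Your proof is correct, and it rests on the same fact as the paper's: for $k'>k$, $C(k',d)-C(k,d)$ is strictly decreasing in $d$. This holds because $\partial_d C(k,d)=2/B(k)$ and $B$ is strictly increasing. Your log-ratio and affine-sign formulations are equivalent restatements of the paper's derivative $2(B(k_2)-B(k_1))/(B(k_1)B(k_2))>0$.

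The difference is in what you do with that fact. The paper cites Topkis' theorem. To do so it must detour through Theorem~\ref{thm:existence}, restricting to a finite lattice on each compact delay interval and then extending to all $d\ge 0$. It also only obtains monotonicity of the smallest and largest minimizers.

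Your direct exchange argument needs only that the argmin sets are nonempty and bounded. It applies to each pair $d_1<d_2$ with no compactness step. Because the crossing is strict, it also gives the stronger conclusion $\max\arg\min_k C(k,d_1)\le\min\arg\min_k C(k,d_2)$, so every selector is monotone. Your argument is therefore more self-contained and slightly sharper.

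One correction to your framing: the ``main obstacle'' paragraph overstates the difficulty. For fixed $k$, $C(k,\cdot)$ is affine in $d$, so the ratio structure poses no obstacle. The decreasing-differences property of $C$ (equivalently, increasing differences of $-C$) that you compute there is exactly the condition the paper verifies.
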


\begin{proof}
We apply Topkis' monotone comparative statics 
theorem~\cite{topkis1978}. Define $f(k,d) = -C(k,d)$. We verify the 
\textit{increasing differences} condition: for $k_2 > k_1$,
\begin{equation}
    \frac{\partial}{\partial d}\bigl[C(k_1, d) - C(k_2, d)\bigr] = 
    \frac{2\bigl(B(k_2) - B(k_1)\bigr)}{B(k_1)\,B(k_2)} > 0.
\end{equation}
The positivity follows because $B(\cdot)$ is strictly increasing. This 
establishes that the \textit{relative} advantage of larger $k$ grows 
with $d$.
Although the action space is $\mathbb{N}^+$, which is not compact,
Theorem~\ref{thm:existence} implies that for any compact delay interval
$d\in[d_1,d_2]$, every minimizer of $C(\cdot,d)$ lies in a finite subset of
$\mathbb{N}^+$. Topkis' theorem therefore applies on this finite lattice, and the
resulting monotonicity of the smallest and largest minimizers extends to all
$d\ge 0$.
\end{proof}

When communication is expensive (large $d$), the fixed cost $2d$ 
dominates the per-cycle budget. The agent should ``batch'' more draft 
tokens per round to amortize this fixed cost, accepting the reduced 
marginal acceptance probability. This is analogous to the economic 
principle that higher fixed costs favor larger production batches.

The following lemma and corollary provide an operational stopping rule.

\begin{lemma}[Discrete Unimodality]
\label{lem:quasi_convex}
Under Assumption~\ref{asm:geometric}, for fixed $d$, the objective $C(k,d)$ is
discrete quasi-convex in $k$. Therefore, the first integer $k$ satisfying
$C(k{+}1,d)\ge C(k,d)$ is a global minimizer.
\end{lemma}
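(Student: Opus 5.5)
We prove the lemma by reducing the sign of the forward difference $C(k{+}1,d)-C(k,d)$ to the sign of an auxiliary sequence that is strictly increasing in $k$. Since $B(k)>0$ and $B(k{+}1)>0$, we have
\begin{equation*}
\operatorname{sign}\bigl[C(k{+}1,d)-C(k,d)\bigr]=\operatorname{sign}\bigl[N(k{+}1,d)B(k)-N(k,d)B(k{+}1)\bigr].
\end{equation*}
Write $c\triangleq c_d+c_v$ and use $N(k{+}1,d)=N(k,d)+c$ and $B(k{+}1)-B(k)=\alpha^{k+1}$. The bracket then simplifies to
\begin{equation*}
g(k)\triangleq c\,B(k)-\alpha^{k+1}N(k,d).
\end{equation*}
It is convenient to normalize by $\alpha^{k+1}>0$ and set
\begin{equation*}
h(k)\triangleq \frac{g(k)}{\alpha^{k+1}}=c\,\frac{\alpha^{-(k+1)}-1}{1-\alpha}-N(k,d),
\end{equation*}
which has the same sign as $g(k)$.

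The key step is to show that $h$ is strictly increasing in $k$. Its first difference is
\begin{equation*}
h(k{+}1)-h(k)=c\,\alpha^{-(k+2)}\,\frac{1-\alpha}{1-\alpha}-c=c\bigl(\alpha^{-(k+2)}-1\bigr),
\end{equation*}
and this is strictly positive because $\alpha\in(0,1)$ and $c\ge c_d>0$. So $h$ is strictly increasing, and the sign sequence of the forward differences of $C(\cdot,d)$ is of the form $(-,\dots,-,\ge 0,+,+,\dots)$. More precisely, once $h(k)\ge 0$ we have $h(k')>0$ for all $k'>k$.

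Quasi-convexity (discrete unimodality) follows directly. Let $k_0$ be the first $k\ge1$ with $C(k{+}1,d)\ge C(k,d)$; it exists because $h(k)\to\infty$, which is consistent with Theorem~\ref{thm:existence}. On $\{1,\dots,k_0\}$ the function $C$ is strictly decreasing, since $h<0$ there. For $k\ge k_0$ it is non-decreasing, and strictly increasing beyond $k_0$. Hence $C(k_0,d)\le C(k,d)$ for every $k\ge1$, so $k_0$ is a global minimizer. Ties can occur only at $k_0$ and $k_0{+}1$, which matches the selectors $k^-$ and $k^+$ of Theorem~\ref{thm:monotonicity}.

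The main obstacle is the algebra in the first step: rewriting the difference of ratios as the single expression $g(k)$ with the correct sign, and then choosing the normalization by $\alpha^{k+1}$ so that monotonicity becomes evident. Without that normalization, $g(k)$ itself is not obviously monotone, because $\alpha^{k+1}N(k,d)$ combines a decaying factor with a growing one. After normalizing, everything else is routine.
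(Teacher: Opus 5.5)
Your proof is correct and takes essentially the same route as the paper. You reduce the sign of $C(k{+}1,d)-C(k,d)$ to $cB(k)-\alpha^{k+1}N(k,d)$, normalize by $\alpha^{k+1}$ to get exactly the paper's auxiliary sequence $H(k)$, and show its first difference $c(\alpha^{-(k+2)}-1)$ is positive; your explicit existence of the first crossing via $h(k)\to\infty$ and the tie characterization at $k_0,k_0{+}1$ are details the paper leaves implicit.
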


\begin{proof}
Let
\begin{equation}
    a \triangleq c_d+c_v,\qquad b \triangleq 2d+c_v,
\end{equation}
so $N(k,d)=ak+b$ and $B(k)=\frac{1-\alpha^{k+1}}{1-\alpha}$.
The condition $C(k{+}1,d)\ge C(k,d)$ is equivalent to
\begin{equation}
    a\,B(k)\ge N(k,d)\alpha^{k+1}.
\end{equation}
Define
\begin{equation}
    H(k)\triangleq \frac{aB(k)}{\alpha^{k+1}}-N(k,d)
    =\frac{a}{1-\alpha}\!\left(\alpha^{-(k+1)}-1\right)-ak-b.
\end{equation}
Then
\begin{equation}
    H(k{+}1)-H(k)=a\!\left(\alpha^{-(k+2)}-1\right)>0,
\end{equation}
so $H(k)$ is strictly increasing. Hence the crossing condition can occur at
most once, i.e., $C(k,d)$ decreases and then increases. Therefore, the first
$k$ satisfying $C(k{+}1,d)\ge C(k,d)$ is globally optimal.
\end{proof}

\begin{corollary}[Marginal Cost Stopping Rule]
\label{cor:closed_form}
Suppose that $C(k,d)$ is discrete quasi-convex in $k$. Then the smallest global
minimizer is the first integer satisfying:
\begin{equation}
\label{eq:simple_condition}
    C(k, d) \leq \frac{c_d + c_v}{\alpha^{k+1}}
\end{equation}
Equivalently: \textit{stop when the current average cost per token 
falls below the marginal cost of generating one additional draft token}.
\end{corollary}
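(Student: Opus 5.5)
The plan is to turn the comparison $C(k{+}1,d)\ge C(k,d)$ into the inequality \eqref{eq:simple_condition} by direct algebra, and then let discrete quasi-convexity (Lemma~\ref{lem:quasi_convex}) carry the optimality claim. The key identity is $B(k{+}1)=B(k)+\alpha^{k+1}$ together with $N(k{+}1,d)=N(k,d)+(c_d+c_v)$. I will write $a=c_d+c_v$ as in the proof of Lemma~\ref{lem:quasi_convex}.

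\emph{Step 1 (rewriting the crossing condition).} Since all $B$'s are positive, I can cross-multiply. The inequality $C(k{+}1,d)\ge C(k,d)$ becomes $\bigl(N(k,d)+a\bigr)B(k)\ge N(k,d)\bigl(B(k)+\alpha^{k+1}\bigr)$. This simplifies to $aB(k)\ge N(k,d)\alpha^{k+1}$, which is exactly the condition already recorded in the proof of Lemma~\ref{lem:quasi_convex}. Dividing by $B(k)\alpha^{k+1}>0$ gives $C(k,d)\le a/\alpha^{k+1}$, which is \eqref{eq:simple_condition}. Every step is reversible, so the two conditions are equivalent for each fixed $k$.

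\emph{Step 2 (first crossing is the smallest minimizer).} By quasi-convexity, $C(\cdot,d)$ is strictly decreasing before the first $k_0$ at which $C(k_0{+}1,d)\ge C(k_0,d)$, and non-decreasing from $k_0$ on. For the ``non-decreasing from $k_0$ on'' part I use the monotonicity of $H$ from Lemma~\ref{lem:quasi_convex}: once $H(k)\ge 0$, it stays nonnegative. Existence of $k_0$ is guaranteed by Theorem~\ref{thm:existence}, or directly because $H(k)\to\infty$.

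It then follows that $C(k,d)>C(k_0,d)$ for every $k<k_0$, since the sequence strictly decreases up to $k_0$. Also $C(k,d)\ge C(k_0,d)$ for every $k>k_0$. Hence $k_0$ is a global minimizer, and no smaller index attains the minimum. So $k_0=k^-(d)$. By Step 1, $k_0$ is precisely the first integer satisfying \eqref{eq:simple_condition}.

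\emph{Main obstacle: the strictness bookkeeping.} The general hypothesis ``quasi-convex'' allows plateaus. I must make sure that ``first $k$ with $C(k{+}1)\ge C(k)$'' really yields strict decrease beforehand, which holds by the definition of ``first''. I also need the tail $k>k_0$ to never dip below $C(k_0)$. This is exactly where I would invoke the single-crossing property of $H$ rather than a loose notion of unimodality.

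Finally, the interpretation in the statement follows from the fact that $a/\alpha^{k+1}$ is the marginal cost per marginal expected accepted token. The numerator increment is $a$, and the denominator increment is $\alpha^{k+1}$.
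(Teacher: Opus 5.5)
Your proposal is correct and follows the paper's proof: both use the same cross-multiplication with $N(k{+}1,d)=N(k,d)+(c_d+c_v)$ and $B(k{+}1)=B(k)+\alpha^{k+1}$ to reach $(c_d+c_v)B(k)\ge N(k,d)\alpha^{k+1}$, and then appeal to Lemma~\ref{lem:quasi_convex} for global optimality of the first crossing. Your version adds two useful details that the paper leaves implicit: you check that $C$ strictly decreases before $k_0$, and you use the strict monotonicity of $H$ rather than bare quasi-convexity (which would permit plateaus), which together justify the ``smallest global minimizer'' claim.
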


\begin{proof}
The condition $C(k{+}1,d) \geq C(k,d)$ simplifies to 
$(c_d + c_v) \cdot B(k) \geq N(k,d) \cdot \alpha^{k+1}$, 
which rearranges to~\eqref{eq:simple_condition}. The derivation uses 
$N(k{+}1,d) = N(k,d) + (c_d{+}c_v)$ and 
$B(k{+}1) = B(k) + \alpha^{k+1}$.
By Lemma~\ref{lem:quasi_convex}, the first such crossing is globally optimal.
\end{proof}

Condition~\eqref{eq:simple_condition} is the discrete analog of the 
classical ``marginal cost equals average cost'' optimality condition. 
The marginal cost $\frac{c_d + c_v}{\alpha^{k+1}}$ grows exponentially 
(reflecting worsening acceptance odds), while the average cost $C(k,d)$ 
first decreases (amortizing the communication fixed cost) then 
increases. Their crossing defines $k^*$. This answers \textbf{Q1}: the 
optimal strategy has an intuitive threshold structure with a precise 
mathematical characterization.

\subsection{Stochastic Communication Delay}
\label{subsec:stochastic_delay}

We now relax the deterministic delay assumption. Let $D$ be a 
non-negative random variable with mean $\mu_D$ and the stopping 
decision is committed \textit{before} observing the delay realization 
(the ``commit-before-observing'' model).

\begin{theorem}[Mean-Sufficiency under Commit-Before-Observing]
\label{thm:stochastic}
Under the commit-before-observing model, the optimal draft length 
depends on the delay distribution only through its mean:
\begin{equation}
    k^* = \arg\min_{k \geq 1} \frac{k(c_d + c_v) + 2\mu_D + c_v}{B(k)}
\end{equation}
\end{theorem}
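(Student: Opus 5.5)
The plan is to make precise what ``commit-before-observing'' means for the objective~\eqref{eq:objective}, and then show that the delay enters only through a linear term in its expectation. Under this model the stopping rule $\tau$ is a deterministic draft length $k$, or more generally a random variable independent of the realized delay $D$. The reason is that the decision is made before $D$ is revealed, and the acceptance process (Assumption~\ref{asm:geometric}) is independent of the channel. So the first step is to restrict attention to deterministic $k\ge 1$, with $D$ drawn independently of the draft process.

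For a fixed $k$, the denominator $\mathbb{E}[B(k)]=B(k)$ does not involve $D$ at all. The numerator is $\mathbb{E}[k c_d + 2D + (k+1)c_v] = k(c_d+c_v)+2\mu_D+c_v$, using only linearity of expectation. Note that no independence is even needed here, because the numerator is affine in $D$. The ratio objective therefore equals $C(k,\mu_D)$ from~\eqref{eq:cost_per_token}, evaluated at $d=\mu_D$. Any two delay distributions with the same mean give identical objective values for every $k$, so they have identical argmin sets. Theorem~\ref{thm:existence}, applied with $d=\mu_D\ge 0$, guarantees that the minimum is attained at a finite $k^*$. This gives the displayed formula. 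If $\arg\min$ is set-valued, the statement is read as equality of the sets, and hence of the selectors $k^\pm$.

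The step I expect to need the most care is justifying the restriction to deterministic $k$. A randomized commitment $\tau$ with law $p$ on $\mathbb{N}^+$ gives a ratio of mixtures, $\sum_k p_k N(k,\mu_D)/\sum_k p_k B(k)$. I will bound this below by $\min_k N(k,\mu_D)/B(k)$ using the mediant inequality, since all the $B(k)$ are positive. Hence randomization cannot improve on the best deterministic $k$, and the optimum again depends on the delay law only via $\mu_D$. One caveat is needed: if the delay is heavy-tailed with $\mu_D=\infty$, every $k$ has infinite cost. So I will assume $\mu_D<\infty$, which is implicit in the statement.
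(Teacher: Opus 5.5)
Your proposal is correct and follows the paper's proof: for each fixed $k$, the denominator $B(k)$ does not involve the delay and the numerator is affine in $D$, so the ratio reduces to $C(k,\mu_D)$ and the argmin depends on the delay law only through $\mu_D$. The paper takes the restriction to deterministic commitments for granted, so your mediant-inequality step for randomized $\tau$, the appeal to Theorem~\ref{thm:existence} for attainment, and the $\mu_D<\infty$ caveat tighten that argument rather than change its route.
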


\begin{proof}
Under commit-before-observing, $B(k)$ is deterministic for fixed $k$, so
\begin{equation}
    \frac{\mathbb{E}[N(k,D)]}{\mathbb{E}[B(k)]}
    =\frac{k(c_d+c_v)+2\mu_D+c_v}{B(k)}
    = C(k,\mu_D).
\end{equation}
Thus the ratio-of-expectations objective depends on the delay law only through
$\mu_D$, reducing optimization to the deterministic-mean case.
\end{proof}

Theorem~\ref{thm:stochastic} shows that under commitment, only the
mean delay affects the optimal strategy for the ratio-of-expectations
objective, while higher moments are irrelevant in this information structure.
It does not claim higher-order moments are irrelevant for state-observable or
adaptive stopping models.
This \textit{changes fundamentally} when the agent can observe network 
state before deciding, as we show next. The gap between these two 
information structures quantifies the value of real-time network 
monitoring.

\subsection{Markov-Modulated Extension}
\label{subsec:markov}

We now model time-varying network conditions. The network state 
$\{s_t\}$ evolves as a Markov chain, and the agent observes $s_n$ 
after generating draft token $n$, using this information to decide 
whether to stop. This addresses \textbf{Q2}. Throughout, we align the
sequential problem with the bandit section by imposing a maximum draft
depth $K_{\max}$ (i.e., mandatory stop at $\tau=K_{\max}$), so that stopping
times satisfy $\tau\le K_{\max}$ and match the arm set
$\mathcal{K}=\{1,\ldots,K_{\max}\}$ in Section~\ref{sec:online}.

\begin{assumption}[Markov-Modulated Channel]
\label{asm:markov}
The network state $\{s_t\}_{t \geq 1}$ is a finite-state Markov chain 
on $\mathcal{S} = \{1, \ldots, S\}$ with transition matrix $P$.
We assume:
\begin{enumerate}
    \item[(a)] \emph{Monotone mean delay}:
    $d(s) = \mathbb{E}[D \mid s_t = s]$ is non-decreasing in $s$
    (states ordered from low to high delay).
    \item[(b)] \emph{Stochastic monotonicity of $P$}: for every
    non-decreasing function $h: \mathcal{S} \to \mathbb{R}$, the
    map $s \mapsto \sum_{s'} P(s' \mid s)\, h(s')$ is
    non-decreasing in $s$. Equivalently, $P(\,\cdot\, \mid s)$
    is stochastically increasing in $s$ in the usual stochastic
    order.
\end{enumerate}
\end{assumption}

\begin{remark}
Condition (b) is standard for monotone MDPs~\cite{topkis1978}
and holds, for example, for birth and death chains modeling
congestion levels and for any tridiagonal $P$ whose row
distributions are stochastically ordered. Without (b), worse
states can transition to better states faster than better states
do, which can break the monotonicity of value functions.
\end{remark}

\begin{proposition}[State-Dependent Threshold under a Bounded Horizon]
\label{thm:markov}
Assume Assumptions~\ref{asm:geometric} and~\ref{asm:markov}. Fix $K_{\max}<\infty$
and require the agent to stop no later than $K_{\max}$ (i.e., $\tau\le K_{\max}$).
For $\lambda\ge 0$, define the $\lambda$-penalized cost after $n$ draft tokens in state $s$ by
\begin{equation}
\label{eq:g_lambda}
    g_\lambda(n,s)\triangleq n c_d + 2d(s) + (n{+}1)c_v - \lambda\, B(n),
\end{equation}
the finite-horizon value functions
\begin{align}
    V_\lambda(K_{\max},s) &= g_\lambda(K_{\max},s), \notag\\
\label{eq:V_lambda}
    V_\lambda(n,s) &=
    \min\!\left\{
    g_\lambda(n,s),\,
    \sum_{s'} P(s'\mid s)\, V_\lambda(n{+}1,s')
    \right\},
    \qquad 1\le n<K_{\max},
\end{align}
and the $Q$-factors
\begin{equation}
\label{eq:Q_factors_markov}
\begin{aligned}
    Q_\lambda^{\mathrm{stop}}(n,s) &\triangleq g_\lambda(n,s),\\
    Q_\lambda^{\mathrm{cont}}(n,s) &\triangleq
    \sum_{s'} P(s'\mid s)\, V_\lambda(n{+}1,s').
\end{aligned}
\end{equation}
For $1\le n<K_{\max}$, define the stopping advantage
\begin{equation}
    \Gamma_\lambda(n,s)\triangleq
    Q_\lambda^{\mathrm{cont}}(n,s)-Q_\lambda^{\mathrm{stop}}(n,s),
\end{equation}
and set $\Gamma_\lambda(K_{\max},s)=+\infty$ for all $s$, encoding mandatory stop at $K_{\max}$.
Suppose that \emph{(i)} for each $s$, $\Gamma_\lambda(n,s)$ is nondecreasing in
$n\in\{1,\ldots,K_{\max}\}$; and \emph{(ii)} the stopping region
$\mathcal{R}_\lambda \triangleq \{(n,s):\Gamma_\lambda(n,s)\ge 0\}$
is decreasing in $s$: if $(n,s)\in\mathcal{R}_\lambda$ and $s'\le s$, then $(n,s')\in\mathcal{R}_\lambda$.
Then an optimal rule for the $\lambda$-penalized problem is a state-dependent threshold policy:
with
\begin{equation}
    k_\lambda^*(s)\triangleq
    \min\bigl\{n\in\{1,\ldots,K_{\max}\}:\Gamma_\lambda(n,s)\ge 0\bigr\},
\end{equation}
it is optimal to continue while the current draft length satisfies $n<k_\lambda^*(s)$ and to stop once $n\ge k_\lambda^*(s)$.
Moreover,
\begin{equation}
    s'\le s \quad\Longrightarrow\quad k_\lambda^*(s')\le k_\lambda^*(s).
\end{equation}
Let $\lambda^*$ be the Dinkelbach parameter for~\eqref{eq:objective} restricted to $\tau\le K_{\max}$.
Then $k_{\lambda^*}^*(s)$ is optimal for the original ratio objective among stopping rules with $\tau\le K_{\max}$.
\end{proposition}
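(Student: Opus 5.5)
The plan splits into three parts: optimality of the threshold rule for fixed $\lambda$, monotonicity of the threshold in the state, and transfer of optimality to the ratio objective via Dinkelbach.

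\textbf{Step 1: threshold structure for fixed $\lambda$.} The problem is a finite-horizon optimal stopping problem with state $(n,s)$. By backward induction on the Bellman recursion~\eqref{eq:V_lambda}, an optimal Markov rule stops at $(n,s)$ exactly when $Q_\lambda^{\mathrm{stop}}(n,s)\le Q_\lambda^{\mathrm{cont}}(n,s)$, i.e. when $\Gamma_\lambda(n,s)\ge 0$. Ties are broken toward stopping; this is harmless because both actions attain $V_\lambda$. At $n=K_{\max}$ stopping is forced, which is encoded by $\Gamma_\lambda=+\infty$, so the set in the definition of $k_\lambda^*(s)$ is nonempty and $k_\lambda^*(s)\le K_{\max}$. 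Hypothesis (i) says $\Gamma_\lambda(\cdot,s)$ is nondecreasing. Hence $\{n:\Gamma_\lambda(n,s)\ge 0\}$ is an upper set $\{k_\lambda^*(s),\dots,K_{\max}\}$, and the Bellman-greedy rule reads: continue iff $n<k_\lambda^*(s)$.

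One point needs care. The state $s_n$ changes between steps, so the realised stopping time is $\tau=\min\{n:n\ge k_\lambda^*(s_n)\}$, not a fixed number. The claim is only that this adapted rule attains $V_\lambda(1,s_1)$, which is the standard verification argument. I would state it by induction: the greedy rule achieves $V_\lambda(n,s)$ from every $(n,s)$, and this value is a lower bound over all stopping rules with $\tau\le K_{\max}$.

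\textbf{Step 2: monotonicity in $s$.} Take $s'\le s$. Set $n=k_\lambda^*(s)$, so $(n,s)\in\mathcal{R}_\lambda$. Hypothesis (ii) gives $(n,s')\in\mathcal{R}_\lambda$, and therefore $k_\lambda^*(s')\le n$. This step needs no appeal to Assumption~\ref{asm:markov}. That assumption, via stochastic monotonicity and monotone $d(s)$, is what would be used to verify (ii) in examples, but here (ii) is assumed directly.

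\textbf{Step 3: Dinkelbach transfer.} Let $\Pi$ be the set of stopping rules with $\tau\le K_{\max}$. Define $\lambda^*=\min_{\tau\in\Pi}\mathbb{E}[N]/\mathbb{E}[B(\tau)]$, and $F(\lambda)=\min_{\tau\in\Pi}\mathbb{E}[N-\lambda B(\tau)]$, where $N=\tau c_d+2D(s_\tau)+(\tau+1)c_v$. The minimum defining $\lambda^*$ is attained because randomised rules form a compact set and the ratio is continuous with denominator at least $B(1)>0$; more simply, one can take the infimum and argue with $\epsilon$.

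Two facts about $F$ are needed.
\begin{itemize}
\item $\mathbb{E}[2D(s_\tau)]=\mathbb{E}[2d(s_\tau)]$, by conditioning on $s_\tau$ with $D$ conditionally drawn given the state. Hence $F(\lambda)=\mathbb{E}[V_\lambda(1,s_1)]$.
\item $F(\lambda^*)=0$. For every $\tau\in\Pi$, $\mathbb{E}[N]\ge\lambda^*\mathbb{E}[B]$, so $F(\lambda^*)\ge 0$. The ratio-optimal $\tau$ gives equality, so $F(\lambda^*)\le 0$.
\end{itemize}

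Now let $\tau^*$ be the threshold rule $k_{\lambda^*}^*$. By Step 1 it attains $F(\lambda^*)=0$, so $\mathbb{E}[N(\tau^*)]=\lambda^*\mathbb{E}[B(\tau^*)]$. Since $\mathbb{E}[B(\tau^*)]>0$, its ratio equals $\lambda^*$, and $\tau^*$ is therefore ratio-optimal.

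The main obstacle I expect is this Dinkelbach step. It requires that the minimiser is attained and that the Bellman value matches the expectation form: the objective is a ratio of expectations, not an expectation of ratios, and $D$ must be conditioned on $s_\tau$. The remaining steps are bookkeeping.
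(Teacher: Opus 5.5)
Your proposal is correct and follows essentially the same route as the paper. The Bellman comparison $Q_\lambda^{\mathrm{stop}}\le Q_\lambda^{\mathrm{cont}}$, hypothesis (i) and the forced stop at $K_{\max}$ give the upper-interval threshold; hypothesis (ii) applied at $n=k_\lambda^*(s)$ gives monotonicity in $s$; and a Dinkelbach argument transfers optimality to the ratio objective. The differences are minor: you define $\lambda^*$ as the optimal ratio and verify $F(\lambda^*)=0$ directly, whereas the paper takes $\lambda^*$ as the unique root of the continuous, strictly decreasing $J$. You also spell out two points the paper leaves implicit, namely the verification argument for the state-adapted stopping time and the replacement of $D(s_\tau)$ by $d(s_\tau)$ via conditioning.
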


\begin{proof}
Fix $\lambda$. The finite-horizon dynamic program~\eqref{eq:V_lambda} is well defined because $\mathcal{S}$ and the action horizon are finite.
For $1\le n<K_{\max}$, stopping is optimal iff
$Q_\lambda^{\mathrm{stop}}(n,s)\le Q_\lambda^{\mathrm{cont}}(n,s)$, equivalently $\Gamma_\lambda(n,s)\ge 0$.
By hypothesis~(i), for each $s$ the set $\{n:\Gamma_\lambda(n,s)\ge 0\}$ is an upper interval in $\{1,\ldots,K_{\max}\}$, and it is nonempty because $\Gamma_\lambda(K_{\max},s)=+\infty$.
Hence $k_\lambda^*(s)$ is well defined.

If $s'\le s$, then $(k_\lambda^*(s),s)\in\mathcal{R}_\lambda$, so hypothesis~(ii) implies $(k_\lambda^*(s),s')\in\mathcal{R}_\lambda$ and therefore $k_\lambda^*(s')\le k_\lambda^*(s)$.
If $k_\lambda^*(s)=K_{\max}$, the same inequality is trivial.

Finally, restrict~\eqref{eq:objective} to $\tau\le K_{\max}$. The policy class is finite and $\mathbb{E}[B(\tau)]\ge 1$.
Define the Dinkelbach objective
\begin{equation}
    J(\lambda)\triangleq \min_{\tau\le K_{\max}}
    \mathbb{E}\bigl[\tau c_d + 2d(s_\tau)+(\tau+1)c_v-\lambda B(\tau)\bigr].
\end{equation}
Then $J$ is continuous and strictly decreasing on $[0,\infty)$, the unique root $\lambda^*$ satisfies $J(\lambda^*)=0$, and a policy minimizing the $\lambda^*$-penalized DP also minimizes the ratio
$\mathbb{E}[\tau c_d + 2d(s_\tau)+(\tau+1)c_v]/\mathbb{E}[B(\tau)]$ over $\tau\le K_{\max}$~\cite{dinkelbach1967}.
\end{proof}

Equation~\eqref{eq:V_lambda} uses a pure total-cost recursion: stopping pays the
current $\lambda$-penalized total cost, while continuing transitions to the next
state and inherits its total-cost value. This avoids mixing total and
incremental accounting in the Bellman recursion.

The single-crossing structure in $n$ is natural in this setting: larger $n$ reduces
the marginal acceptance gain through $\alpha^{n+1}$, while larger $s$ increases
the communication cost saved by continuing. Hypothesis~(ii) on $\mathcal{R}_\lambda$
rules out pathological level shifts in $\Gamma_\lambda(\cdot,s)$ across $s$ that
would reorder stopping thresholds; a sufficient analytic condition is that
$\Gamma_\lambda(n,s)$ be non-increasing in $s$ for each $n$.
This condition is consistent with
the U-shaped empirical cost curves observed in Section~\ref{sec:experiments}.

\subsection{Phase Transition and Logarithmic Scaling}
\label{subsec:phase}

The preceding results establish that $k^*$ increases with $d$. A natural follow-up question is: \textit{how fast?} The following theorem is our most surprising result and shows the growth is remarkably slow.

\begin{theorem}[Critical Delay and Logarithmic Scaling]
\label{thm:phase_transition}
Assume the discrete quasi-convexity condition in
Lemma~\ref{lem:quasi_convex} and define the critical delay:
\begin{equation}
\label{eq:dc}
    d_c = \frac{(c_d{+}c_v)(1{+}\alpha)}{2\alpha^2} - 
    \frac{c_d + 2c_v}{2}
\end{equation}
If $d_c>0$, then:
\begin{enumerate}
    \item \textit{Sub-critical regime} ($d < d_c$): letting
    $k^-{:=}\min\arg\min_{k\ge 1} C(k,d)$, one has $k^-{=}1$. Moreover, when the
    first crossing is strict (equivalently $C(1,d)<C(2,d)$), the minimizer is unique.
    \item \textit{Boundary} ($d = d_c$): $k=1$ and $k=2$ are both
    optimal under the local crossing condition.
    \item \textit{Asymptotic regime} (large $d$): letting
    $k^-({d}){:=}\min\arg\min_{k\ge 1} C(k,d)$,
    \begin{equation}
        k^-(d)=\Theta\!\left(\frac{\log d}{\log(1/\alpha)}\right).
    \end{equation}
    The same $\Theta(\cdot)$ scaling holds for every optimal selector
    $k\in\arg\min_{k\ge 1} C(k,d)$: since $H(\cdot;d)$ in
    Lemma~\ref{lem:quasi_convex} is strictly increasing, the sign change of
    $C(k{+}1,d)-C(k,d)$ occurs at most once, so the argmin is either a single
    integer or two adjacent integers; hence every optimal selector shares the
    same $\Theta\bigl(\log d/\log(1/\alpha)\bigr)$ scaling.
\end{enumerate}
If $d_c\le 0$, the system is already in the post-transition regime at zero delay.
\end{theorem}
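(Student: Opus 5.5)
The plan is to reduce every claim to the single-crossing function $H(k)=H(k;d)$ from the proof of Lemma~\ref{lem:quasi_convex}. Write $a=c_d+c_v$ and $b=2d+c_v$. That proof shows $C(k{+}1,d)\ge C(k,d)$ iff $H(k;d)\ge 0$, and that $H(\cdot;d)$ is strictly increasing in $k$. Hence $k^-(d)$ is exactly the first integer $k\ge 1$ with $H(k;d)\ge 0$: before it the cost strictly decreases, and from it on the cost does not decrease. Since $H(k;d)$ is also strictly decreasing (affinely) in $d$, both the phase transition and the asymptotics become statements about where a monotone sequence changes sign.

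\emph{Sub-critical regime and boundary.} Evaluate the crossing at $k=1$, using $B(1)=1+\alpha$ and $N(1,d)=a+2d+c_v$. The condition $H(1;d)\ge 0$ reads $a(1+\alpha)\ge \alpha^2(a+2d+c_v)$. Solving for $d$ gives $d\le \frac{a(1+\alpha)}{2\alpha^2}-\frac{a+c_v}{2}=d_c$, since $a+c_v=c_d+2c_v$; this is exactly \eqref{eq:dc}. So for $d\le d_c$ the first crossing occurs at $k=1$, and Lemma~\ref{lem:quasi_convex} gives $k^-=1$. For $d<d_c$ the inequality is strict, so $C(1,d)<C(2,d)$. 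Strict monotonicity of $H$ then gives $H(k;d)>0$ for all $k\ge 1$, so $C(\cdot,d)$ is strictly increasing from $k=1$ and the minimizer is unique. At $d=d_c$ we have $H(1;d_c)=0$, i.e.\ $C(1,d_c)=C(2,d_c)$, and $H(k;d_c)>0$ for $k\ge 2$. Thus exactly $k=1,2$ are optimal. If $d_c\le 0$, then every $d>0$ has $H(1;d)<0$, so $k^-(d)\ge 2$; at $d=0=d_c$ the point $k=2$ is already co-optimal. This is the post-transition statement.

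\emph{Logarithmic scaling.} Put $\rho=\log(1/\alpha)$ and use the closed form $H(k)=\frac{a}{1-\alpha}(\alpha^{-(k+1)}-1)-ak-b$.

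For the lower bound: $H(k)\ge 0$ forces $\frac{a}{1-\alpha}\alpha^{-(k+1)}\ge b$. Hence $k^-(d)+1\ge \log\bigl((1-\alpha)b/a\bigr)/\rho$, which is $\log d/\rho-O(1)$.

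For the upper bound: take $k_0=\lceil 2\log b/\rho\rceil$, so $\alpha^{-(k_0+1)}\ge b^2$. Then $H(k_0)\ge \frac{a}{1-\alpha}(b^2-1)-a(2\log b/\rho+1)-b$, which is positive once $d$, and hence $b$, is large enough because the quadratic term dominates. Therefore $k^-(d)\le k_0=O(\log d/\rho)$. Together these give $k^-(d)=\Theta(\log d/\log(1/\alpha))$. A sharper choice $\alpha^{-(k+1)}\approx c\,b$ would even give $k^-(d)=\log d/\rho+O(\log\log d)$, but the $\Theta$ claim does not need it.

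For every other optimal selector, the single sign change of $H$ means the argmin is $\{k^-\}$ or $\{k^-,k^-+1\}$. Shifting by one does not affect the $\Theta$ rate.

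\emph{Main obstacle.} The delicate step is the upper bound: the linear term $-ak$ in $H$ must be beaten while $k$ still grows like $\log d$. Choosing $k_0$ with a factor $2$, so that $\alpha^{-(k_0+1)}$ is quadratic in $b$, makes the domination immediate. The sub-critical and boundary parts reduce to the algebra at $k=1$, which the definition of $d_c$ was built to match.
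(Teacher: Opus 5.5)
Your proof is correct and follows the paper's argument. You reduce everything to the strictly increasing crossing function $H(k;d)$, solve $H(1;d)\ge 0$ to obtain $d_c$, lower-bound $k^-(d)$ from $H(k^-;d)\ge 0$, and upper-bound it with an explicit witness. The only real difference is the witness: you use $\alpha^{-(k_0+1)}\ge b^2$, whereas the paper uses $\alpha^{-(K_d+1)}\ge rMb$ with $arM/(1-\alpha)>1$; combined with the lower bound, the paper's choice pins down $k^-(d)=\log d/\log(1/\alpha)+O(1)$, sharper than the $O(\log\log d)$ refinement you mention. One small fix: in the $d_c\le 0$ clause, write $d>d_c$ rather than $d>0$, so that the case $d=0$ with $d_c<0$ is covered.
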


\begin{proof}
Set $a\triangleq c_d+c_v>0$ and $b\triangleq 2d+c_v$, so $N(k,d)=ak+b$ and
$B(k)=(1-\alpha^{k+1})/(1-\alpha)$.
By Lemma~\ref{lem:quasi_convex}, $k^-(d)$ is the first integer $k\ge 1$ such that
$C(k{+}1,d)\ge C(k,d)$, equivalently
\begin{equation}
\label{eq:phase_cross}
    a\,B(k)\ge (ak+b)\alpha^{k+1}.
\end{equation}
Define
\begin{equation}
    H(k;d)\triangleq \frac{a\,B(k)}{\alpha^{k+1}}-(ak+b)
    =\frac{a}{1-\alpha}\Bigl(\alpha^{-(k+1)}-1\Bigr)-ak-b.
\end{equation}
As in the proof of Lemma~\ref{lem:quasi_convex},
\begin{equation}
    H(k{+}1;d)-H(k;d)=a\bigl(\alpha^{-(k+2)}-1\bigr)>0,
\end{equation}
so $H(\cdot;d)$ is strictly increasing and $k^-(d)=\min\{k\ge 1:H(k;d)\ge 0\}$.

The condition $k^-(d)=1$ is equivalent to $H(1;d)\ge 0$, i.e., to
$C(1,d)\le (c_d+c_v)/\alpha^2$.
Substituting $C(1,d)=(c_d+2d+2c_v)/(1+\alpha)$ and solving for $d$ yields $d\le d_c$
with $d_c$ in~\eqref{eq:dc}. Hence, if $d<d_c$, then $k^-(d)=1$, and when the first
crossing is strict ($C(1,d)<C(2,d)$), the minimizer is unique.
If $d=d_c$, then $C(1,d)=C(2,d)$ while strict increase of $H(\cdot;d)$ implies
$C(k,d)>C(1,d)$ for all $k\ge 3$, so $k=1$ and $k=2$ are both optimal.

For logarithmic scaling, let $r\triangleq 1/\alpha>1$.
Since $H(k^-(d);d)\ge 0$,
\begin{equation}
    \frac{a}{1-\alpha}\Bigl(r^{k^-(d)+1}-1\Bigr)\ge a k^-(d)+b\ge b=2d+c_v,
\end{equation}
so
\begin{equation}
\begin{aligned}
    r^{k^-(d)+1}&\ge 1+\frac{(1-\alpha)(2d+c_v)}{a},\\
    k^-(d)&\ge \frac{\log\bigl(1+(1-\alpha)(2d+c_v)/a\bigr)}{\log r}-1
    =\Omega\!\left(\frac{\log d}{\log(1/\alpha)}\right).
\end{aligned}
\end{equation}

For an upper bound, fix any
\begin{equation}
    M>\frac{2(1-\alpha)}{ar}
\end{equation}
and define
\begin{equation}
    K_d\triangleq \left\lceil
    \frac{\log\bigl(M(2d+c_v)\bigr)}{\log r}
    \right\rceil.
\end{equation}
Then $r^{K_d+1}\ge rM(2d+c_v)$, hence
\begin{equation}
\begin{aligned}
    H(K_d;d)
    &\ge \frac{a}{1-\alpha}\Bigl(rM(2d+c_v)-1\Bigr)-aK_d-(2d+c_v)\\
    &=\bigl(2d+c_v\bigr)\left(\frac{arM}{1-\alpha}-1\right)-aK_d
    -\frac{a}{1-\alpha}.
\end{aligned}
\end{equation}
By the choice of $M$, the parenthesis is strictly positive, while $K_d=O(\log d)$.
Thus $H(K_d;d)\ge 0$ for all sufficiently large $d$, so
$k^-(d)\le K_d=O\bigl(\log d/\log(1/\alpha)\bigr)$.
Combining the bounds yields the $\Theta(\cdot)$ claim for $k^-(d)$, and the same
order holds for any minimizer because, as above, strict monotonicity of
$H(\cdot;d)$ implies the argmin is a singleton or two adjacent integers.
\end{proof}

Theorem~\ref{thm:phase_transition} is the central insight of this paper. 
It reveals three non-obvious facts:

(1) Doubling communication 
delay adds only $ 1/\!\log(1/\alpha)$ tokens to the optimal 
draft length. For typical $\alpha = 0.7$, this is merely $2.8$ 
extra tokens. The intuition is that exponential acceptance decay provides 
a natural ``damping'' that prevents the optimal response from 
overreacting to delay increases.

(2) Below $d_c$, communication is cheap 
enough that single-token speculation (maximum verification frequency) 
is optimal regardless of further delay reduction. This provides a 
concrete threshold for system designers: if $d < d_c$, there is no 
benefit from multi-token speculation.

(3) The logarithmic scaling means that 
even in extremely high-delay scenarios (e.g., satellite links), the 
optimal draft length remains moderate (e.g., $k^* \approx 10$ to $15$ 
for $d = 500$ms, $\alpha = 0.7$). System implementations need not 
support arbitrarily long speculation buffers.

\subsection{Value of Network State Information}
\label{subsec:voi}

Proposition~\ref{thm:markov} characterizes the structure of state-aware
policies. We now quantify when observing network state can improve performance,
answering the second part of \textbf{Q2}.

\begin{theorem}[Value of State Information under the Ratio Objective]
\label{thm:voi}
Consider a contextual model in which the network state $s\in\mathcal{S}$ is observed
at the beginning of each speculation round and remains fixed during that round.
Let $\{\pi_s\}_{s\in\mathcal{S}}$ be the state distribution.
For a state-dependent policy $\kappa:\mathcal{S}\to\{1,\ldots,K_{\max}\}$, define the
ratio-of-expectations cost
\begin{equation}
\label{eq:C_ctx}
    C_{\mathrm{ctx}}(\kappa)\triangleq
    \frac{\sum_{s\in\mathcal{S}}\pi_s\, N\!\bigl(\kappa(s),d(s)\bigr)}
    {\sum_{s\in\mathcal{S}}\pi_s\, B\!\bigl(\kappa(s)\bigr)}.
\end{equation}
Let
\begin{equation}
    C_{\mathrm{ctx}}^\star\triangleq
    \min_{\kappa:\mathcal{S}\to\{1,\ldots,K_{\max}\}} C_{\mathrm{ctx}}(\kappa).
\end{equation}
For a blind fixed policy $k\in\{1,\ldots,K_{\max}\}$, define
\begin{equation}
\label{eq:C_blind_star}
    C_{\mathrm{blind}}(k)\triangleq
    \frac{\sum_{s\in\mathcal{S}}\pi_s\, N(k,d(s))}{B(k)}
    = C(k,\mu_D),
    \mu_D\triangleq \sum_{s\in\mathcal{S}}\pi_s\, d(s),
\end{equation}
and $C_{\mathrm{blind}}^\star\triangleq \min_{k} C_{\mathrm{blind}}(k)$.
Then
\begin{equation}
\label{eq:voi_def}
    \mathrm{VOI}\triangleq C_{\mathrm{blind}}^\star-C_{\mathrm{ctx}}^\star \ge 0.
\end{equation}
Moreover, the inequality is strict if and only if there exists $\kappa$ such that
$C_{\mathrm{ctx}}(\kappa)<C_{\mathrm{blind}}^\star$.
\end{theorem}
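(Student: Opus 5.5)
The plan is to show that every blind policy is a special case of a contextual policy, and that the contextual and blind costs agree on such policies. The nonnegativity of $\mathrm{VOI}$ then follows from minimizing over a larger set. The strictness characterization is essentially a restatement of the definitions.

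\emph{Embedding.} For each fixed $k\in\{1,\ldots,K_{\max}\}$, consider the constant map $\kappa_k(s)\equiv k$. Substituting into \eqref{eq:C_ctx} gives
\begin{equation}
C_{\mathrm{ctx}}(\kappa_k)=\frac{\sum_{s}\pi_s N(k,d(s))}{\sum_s \pi_s B(k)}=\frac{\sum_{s}\pi_s N(k,d(s))}{B(k)}=C_{\mathrm{blind}}(k),
\end{equation}
using $\sum_s\pi_s=1$. Since $N(k,\cdot)$ is affine in $d$, the numerator equals $N(k,\mu_D)$, which confirms the identity $C_{\mathrm{blind}}(k)=C(k,\mu_D)$ stated in \eqref{eq:C_blind_star}. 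This is the same mean-sufficiency mechanism as in Theorem~\ref{thm:stochastic}.

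\emph{Nonnegativity.} Every $C_{\mathrm{blind}}(k)$ equals $C_{\mathrm{ctx}}(\kappa_k)$ for some admissible $\kappa_k$. Hence
\begin{equation}
C^\star_{\mathrm{ctx}}\le\min_k C_{\mathrm{ctx}}(\kappa_k)=C^\star_{\mathrm{blind}},
\end{equation}
that is, $\mathrm{VOI}\ge 0$. Both minima are attained because the sets of policies are finite: there are $K_{\max}^{|\mathcal{S}|}$ contextual maps and $K_{\max}$ blind arms. All denominators satisfy $B(\cdot)\ge 1>0$, so every ratio is well defined.

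\emph{Strictness.} Suppose there exists $\kappa$ with $C_{\mathrm{ctx}}(\kappa)<C^\star_{\mathrm{blind}}$. Then $C^\star_{\mathrm{ctx}}\le C_{\mathrm{ctx}}(\kappa)<C^\star_{\mathrm{blind}}$, so $\mathrm{VOI}>0$. Conversely, if $\mathrm{VOI}>0$, take $\kappa$ to be a minimizer attaining $C^\star_{\mathrm{ctx}}$; it exists by finiteness, and it satisfies the strict inequality.

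I do not expect a real obstacle here. The only points needing care are that the minima are attained, which finiteness of the policy class and positivity of the denominators guarantee, and that the normalization $\sum_s\pi_s=1$ is used in the embedding step.
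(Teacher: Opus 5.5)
Your proposal is correct and follows the paper's own argument: you embed each blind arm $k$ as the constant map $\kappa_k(s)\equiv k$, show $C_{\mathrm{ctx}}(\kappa_k)=C_{\mathrm{blind}}(k)$ using $\sum_s\pi_s=1$, and deduce $C^\star_{\mathrm{ctx}}\le C^\star_{\mathrm{blind}}$. The strictness claim then follows directly from the definitions, just as in the paper. Your extra checks are correct but are bookkeeping only: affinity of $N$ in $d$ giving $C_{\mathrm{blind}}(k)=C(k,\mu_D)$, attainment of both minima by finiteness, and $B(\cdot)\ge 1$.
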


\begin{proof}
Every blind policy $k$ corresponds to the constant mapping $\kappa_k(s)\equiv k$, so
\begin{equation}
    C_{\mathrm{ctx}}^\star
    \le \min_k C_{\mathrm{ctx}}(\kappa_k).
\end{equation}
For $\kappa_k$,
\begin{equation}
\begin{aligned}
    C_{\mathrm{ctx}}(\kappa_k)
    &=\frac{\sum_s \pi_s\, N(k,d(s))}{\sum_s \pi_s\, B(k)}
    =\frac{\sum_s \pi_s\, N(k,d(s))}{B(k)}\\
    &=C_{\mathrm{blind}}(k),
\end{aligned}
\end{equation}
where we used $\sum_s\pi_s=1$ and $B(k)$ not depending on $s$.
Hence $C_{\mathrm{ctx}}^\star\le C_{\mathrm{blind}}^\star$, i.e., $\mathrm{VOI}\ge 0$.
Strict positivity holds exactly when some (possibly nonconstant) $\kappa$ achieves a
strictly smaller ratio than every constant policy.
\end{proof}

Our structural results rely on the acceptance model to varying degrees. 
Theorem~\ref{thm:existence} (existence/finiteness) requires only that 
$B(k)$ remain bounded as $k \to \infty$, i.e., that acceptance 
probabilities decay fast enough. Theorem~\ref{thm:monotonicity} 
requires $B(k)$ strictly increasing in $k$, hence positive per-position 
acceptance. Theorem~\ref{thm:phase_transition} and the logarithmic 
envelope use exponential decay $\alpha^k$; for heavier-tailed 
acceptance (e.g., $q_k \propto (1{+}k)^{-\beta}$), scaling becomes 
super-logarithmic. Corollary~\ref{cor:closed_form} holds whenever 
$C(k,d)$ is eventually increasing in $k$ (quasi-convexity), which is 
typical when marginal acceptance decays faster than marginal draft 
cost grows. Extensions to content-dependent, non-stationary acceptance 
are important future work.

This theorem provides a decision criterion for whether to invest in 
network monitoring infrastructure: if the network delay variance is 
small (e.g., wired datacenter interconnect), a fixed $k$ suffices; if 
the variance is large (e.g., cellular networks), state-adaptive 
strategies provide measurable benefit.

In summary, this section establishes that the optimal speculation 
strategy under communication constraints is a threshold policy with 
logarithmic delay sensitivity and a sharp phase transition. The next 
section addresses the practical challenge of unknown system parameters.

\section{Online Learning Algorithm}
\label{sec:online}

This section addresses \textbf{Q3}: learning the optimal draft length
when both the acceptance rate $\alpha$ and delay distribution are
unknown, a practical scenario arising when deploying on new hardware
or connecting to unfamiliar network environments. The key technical
challenge is that our objective~\eqref{eq:arm_cost} is a \textit{ratio
of expectations} rather than a simple expected reward, which invalidates
standard bandit algorithms designed for additive objectives. Naively
minimizing the per-round ratio $N_t/A_t$ optimizes a different criterion
(due to Jensen's inequality), necessitating a ratio-of-sums estimator
with carefully designed confidence bounds.

We formulate the problem as a multi-armed bandit
(\S\ref{subsec:bandit_setup}), present the UCB-SpecStop algorithm
(\S\ref{subsec:ucb_alg}), prove its regret guarantee
(\S\ref{subsec:regret_proof}), and extend it to the contextual setting
(\S\ref{subsec:contextual}). The main result is Theorem~\ref{thm:regret},
which establishes gap-dependent and gap-free \emph{expected} regret bounds of order
$O\!\bigl(L_{\max}\sqrt{K_{\max} T \log(K_{\max} T)}\bigr)$ despite the
non-standard ratio structure (for sufficiently large $\beta$).

\subsection{Bandit Formulation}
\label{subsec:bandit_setup}

The theoretical results of Section~\ref{sec:theory} assume known
$\alpha$, $c_d$, $c_v$, and delay distribution. In practice, these
parameters are difficult to estimate \textit{a priori}: the acceptance
rate $\alpha$ depends on the specific draft and target model pair and
input distribution, while network delay statistics vary across
deployment environments and time. We therefore seek an online algorithm
that learns the optimal draft length through repeated interaction.

At each decoding round $t = 1, \ldots, T$:
\begin{enumerate}
    \item The agent selects $k_t \in \mathcal{K} = \{1, \ldots, K_{\max}\}$.
    \item Nature reveals the realized numerator and denominator
    \begin{align}
        N_t &= k_t(c_d{+}c_v) + 2D_t + c_v, \label{eq:Nt}\\
        A_t &= \text{number of accepted tokens in the round}, \label{eq:At}
    \end{align}
    with $D_t$ the realized delay. The per-round \textit{ratio}
    $N_t/A_t$ is a random variable whose expectation generally differs
    from $\mathbb{E}[N_t]/\mathbb{E}[A_t]$ (Jensen's inequality).
\end{enumerate}

A subtle but crucial point is the choice of optimization criterion.
The design objective is the \textit{ratio of expectations}:
\begin{equation}
\label{eq:arm_cost}
    C(k) \triangleq \frac{\mathbb{E}[N_t \mid k_t = k]}
    {\mathbb{E}[A_t \mid k_t = k]},
\end{equation}
which aligns with~\eqref{eq:cost_per_token} when $(D_t, A_t)$ are
drawn from the same generative model for each fixed $k$. Minimizing
$\mathbb{E}[N_t/A_t \mid k_t{=}k]$ instead would optimize the
\textit{expectation of the ratio}, which is a different criterion that
overweights rounds with few accepted tokens. We therefore use a
\textit{ratio-of-sums} estimator that directly targets~\eqref{eq:arm_cost}.

\begin{definition}[Regret]
\label{def:regret}
Let $k^* \in \arg\min_{k \in \mathcal{K}} C(k)$. The cumulative regret is
\begin{equation}
    \label{eq:regret_def}
    R(T) = \sum_{t=1}^T \bigl( C(k_t) - C(k^*) \bigr).
\end{equation}
\end{definition}

The regret definition charges the \textit{expected} suboptimality gap
$C(k_t) - C(k^*)$ at each round, even though the agent observes only
stochastic realizations $(N_t, A_t)$. This is standard in the bandit
literature~\cite{auer2002finite} and measures the cumulative cost of
exploration.

\subsection{UCB-SpecStop Algorithm}
\label{subsec:ucb_alg}

The core design challenge is constructing a reliable confidence interval
for the ratio estimator $\hat{C}(k) = S_N(k)/S_A(k)$. Unlike standard
UCB1 where the sample mean directly estimates the objective, here the
ratio of two random quantities requires joint concentration control.

For each arm $k$, we maintain cumulative statistics $S_N(k) =
\sum_{t: k_t = k} N_t$ and $S_A(k) = \sum_{t: k_t = k} A_t$, and let
$T_k$ count how many times arm $k$ is played.
The ratio-of-sums estimator $\hat{C}(k) = S_N(k)/S_A(k)$ is consistent
for $C(k)$ by the strong law of large numbers applied to numerator and
denominator separately. Crucially, this estimator avoids the bias
inherent in averaging per-round ratios $\frac{1}{T_k}\sum N_t/A_t$.

Algorithm~\ref{alg:ucb} combines this estimator with a confidence width
that matches the confidence scale in Lemma~\ref{lem:ratio_concentration} with
$\delta\asymp T^{-2}$ (cf.~\eqref{eq:Lmax},~\eqref{eq:ucb_index_scale}).

\begin{algorithm}[t]
\caption{UCB-SpecStop}
\label{alg:ucb}
\begin{algorithmic}[1]
\REQUIRE Arms $\mathcal{K} = \{1, \ldots, K_{\max}\}$, horizon $T$,
confidence parameter $\beta>0$ (Theorem~\ref{thm:regret} assumes $\beta\ge c$
with $c$ from Lemma~\ref{lem:ratio_concentration})
\ENSURE Estimated optimal draft length $\hat{k}^*$
\STATE \textbf{Initialize:} $S_N(k) \leftarrow 0$, 
$S_A(k) \leftarrow 0$, $T_k \leftarrow 0$ for all $k \in \mathcal{K}$
\FOR{$t = 1, 2, \ldots, T$}
    \IF{there exists $k$ with $T_k = 0$}
        \STATE $k_t \leftarrow$ any such $k$
    \ELSE
        \STATE $k_t \leftarrow \arg\min_{k \in \mathcal{K}} 
        \left[\dfrac{S_N(k)}{S_A(k)} -         \beta\,
        L_{\max}\sqrt{\dfrac{\log(4K_{\max}T^2)}{T_k}}\right]$
    \ENDIF
    \STATE Play $k_t$; observe $N_t$ and $A_t$
    \STATE $S_N(k_t) \leftarrow S_N(k_t) + N_t$;\;
    $S_A(k_t) \leftarrow S_A(k_t) + A_t$;\;
    $T_{k_t} \leftarrow T_{k_t} + 1$
\ENDFOR
\RETURN $\hat{k}^* \leftarrow \arg\min_{k \in \mathcal{K}} S_N(k)/S_A(k)$
\end{algorithmic}
\end{algorithm}

The selection rule in Line~6 can be
understood as an optimistic estimate: the agent subtracts a confidence
bonus from the cost estimate, selecting the arm with the lowest
plausible cost. The $\sqrt{\log(4K_{\max}T^2)/T_k}$ term balances exploration
(trying under-sampled arms) against exploitation (playing the
empirically best arm). The division by $S_A(k)$ rather than $T_k$
aligns the estimate with the ratio-of-expectations objective by normalizing
cumulative cost with cumulative accepted tokens. The confidence bonus uses
the $T_k^{-1/2}$ scale from Lemma~\ref{lem:ratio_concentration}.

\subsection{Regret Guarantee}
\label{subsec:regret_proof}

\begin{assumption}[Bounded costs]
\label{asm:bounded}
The delays are bounded: $D \leq D_{\max}$ almost surely. With 
$K_{\max} < \infty$, we have $N_t \leq N_{\max} \triangleq 
K_{\max}(c_d{+}c_v) + 2D_{\max} + c_v$. Moreover, since each round accepts at
least one token (the bonus token) and at most $k_t{+}1 \le K_{\max}{+}1$
tokens, we have $1 \le A_t \le K_{\max}{+}1$ almost surely, so
$\mathbb{E}[A_t \mid k_t=k] \geq B_{\min} \triangleq 1$.
For each fixed arm $k$, observations $(N_t,A_t)$ are independent (or
conditionally independent given $k_t{=}k$).
\end{assumption}

The lower bound $A_t \geq 1$ is guaranteed by the speculative decoding protocol:
even if all $k$ draft tokens are rejected, the target model generates one bonus
token~\cite{leviathan2023fast}. The upper bound $A_t \le K_{\max}+1$ follows
because at most $k_t+1$ tokens can be accepted in one round.
Together these bounds control denominator fluctuations in the ratio estimator.
Define $A_{\max}\triangleq K_{\max}+1$ (matching the almost-sure bound on $A_t$)
and the ratio-concentration scale
\begin{equation}
\label{eq:Lmax}
    L_{\max}\;\triangleq\;\frac{N_{\max}}{B_{\min}}+\frac{N_{\max}A_{\max}}{B_{\min}^2}.
\end{equation}
This is the natural linearized magnitude for perturbations of
$S_N(k)/S_A(k)$ when controlling both numerator and denominator fluctuations.

\begin{lemma}[Uniform Concentration of the Ratio-of-Sums Estimator]
\label{lem:ratio_concentration}
Fix an arm $k$ and let $(N_{k,i},A_{k,i})_{i\ge 1}$ denote the i.i.d.\ observations
generated whenever arm $k$ is pulled.
Assume $0\le N_{k,i}\le N_{\max}$ and $B_{\min}\le A_{k,i}\le A_{\max}$ almost surely
with $B_{\min}>0$. Define $\mu_N(k)=\mathbb{E}[N_{k,1}]$, $\mu_A(k)=\mathbb{E}[A_{k,1}]$,
and $C(k)=\mu_N(k)/\mu_A(k)$.
After $n\ge 1$ pulls of arm $k$, let
\begin{equation}
    \widehat C_{k,n}\triangleq \frac{\sum_{i=1}^n N_{k,i}}{\sum_{i=1}^n A_{k,i}}.
\end{equation}
Under Assumption~\ref{asm:bounded}, there exists a universal constant $c>0$ such that,
with probability at least $1-\delta$, for all $k\in\{1,\ldots,K_{\max}\}$ and all
$1\le n\le T$,
\begin{equation}
\label{eq:ratio_conc}
    \bigl|\widehat C_{k,n}-C(k)\bigr|
    \le
    c\,L_{\max}\sqrt{\frac{\log(4K_{\max}T/\delta)}{n}},
\end{equation}
where $L_{\max}$ is given in~\eqref{eq:Lmax}.
The adaptivity of the arm-selection rule does not invalidate~\eqref{eq:ratio_conc}:
for each $k$, the subsequence observed on pulls of arm $k$ equals the first $n$
samples of $(N_{k,i},A_{k,i})$, so the uniform-in-$n$ bound applies under adaptive
sampling after a union bound over $k$ and $n$.
\end{lemma}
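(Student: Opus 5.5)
The plan is to combine Hoeffding's inequality for the numerator and denominator sample means with a deterministic linearization of the ratio, and then take a union bound over arms and sample sizes. Fix an arm $k$ and a sample size $n$. Write $\bar N_n=\frac1n\sum_{i\le n}N_{k,i}$ and $\bar A_n=\frac1n\sum_{i\le n}A_{k,i}$, so that $\widehat C_{k,n}=\bar N_n/\bar A_n$. The variables $N_{k,i}$ take values in $[0,N_{\max}]$ and the $A_{k,i}$ take values in $[B_{\min},A_{\max}]\subseteq[0,A_{\max}]$. Hoeffding's inequality therefore gives, for any $\delta'>0$, with probability at least $1-\delta'$,
\begin{equation*}
|\bar N_n-\mu_N(k)|\le N_{\max}\sqrt{\tfrac{\log(4/\delta')}{2n}},\qquad
|\bar A_n-\mu_A(k)|\le A_{\max}\sqrt{\tfrac{\log(4/\delta')}{2n}},
\end{equation*}
where each of the two tails is allotted $\delta'/2$.

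The second step is an exact algebraic decomposition of the error:
\begin{equation*}
\widehat C_{k,n}-C(k)=\frac{\bar N_n-\mu_N(k)}{\bar A_n}+\mu_N(k)\,\frac{\mu_A(k)-\bar A_n}{\bar A_n\,\mu_A(k)}.
\end{equation*}
This identity is where the almost-sure lower bound on the denominator does the work. Since every $A_{k,i}\ge B_{\min}$, we have $\bar A_n\ge B_{\min}$ and $\mu_A(k)\ge B_{\min}$ deterministically, with no high-probability event needed. We also have $\mu_N(k)\le N_{\max}$. Substituting these bounds gives
\begin{equation*}
|\widehat C_{k,n}-C(k)|\le \frac{|\bar N_n-\mu_N|}{B_{\min}}+\frac{N_{\max}|\bar A_n-\mu_A|}{B_{\min}^2}\le\Big(\frac{N_{\max}}{B_{\min}}+\frac{N_{\max}A_{\max}}{B_{\min}^2}\Big)\sqrt{\tfrac{\log(4/\delta')}{2n}}.
\end{equation*}
The bracket is exactly $L_{\max}$ from \eqref{eq:Lmax}. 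This also explains why $L_{\max}$ is defined as it is: it is the coefficient produced by this linearization.

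The third step handles uniformity and adaptivity. Set $\delta'=\delta/(K_{\max}T)$ and take a union bound over the $K_{\max}T$ pairs $(k,n)$. The total failure probability is at most $\delta$, and the logarithm becomes $\log(4K_{\max}T/\delta)$. The bound then holds with constant $c=1/\sqrt2$, and any universal $c\ge 1/\sqrt2$ also works.

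For adaptivity, I would use the standard stack-of-rewards construction. For each arm, an i.i.d. sequence $(N_{k,i},A_{k,i})_{i\ge1}$ is drawn in advance, and the $i$-th pull of arm $k$ reveals the $i$-th entry. This reproduces the law of the interaction, given the conditional independence in Assumption~\ref{asm:bounded}. The event established above concerns only these fixed sequences and holds simultaneously for all $n\le T$. Consequently, whatever random number of pulls $T_k$ the algorithm makes, the bound applies at $n=T_k$.

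The main obstacle is justifying this reduction from adaptive sampling to fixed-$n$ concentration. The point to emphasize is that we never condition on $T_k$; we only evaluate a uniform-in-$n$ event at a random index. The ratio itself causes no difficulty, because $B_{\min}>0$ holds almost surely and so the denominator is never small. If a sharper constant were desired, one could use a Bernstein-type or maximal inequality, but that is unnecessary here.
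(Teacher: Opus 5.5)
Your proposal is correct and follows essentially the same route as the paper's proof: Hoeffding bounds for the numerator and denominator sample means, then the same exact decomposition of $\widehat C_{k,n}-C(k)$, using the almost-sure lower bound $B_{\min}$ on both $\bar A_n$ and $\mu_A(k)$ to produce exactly the coefficient $L_{\max}$, then a union bound over the $K_{\max}T$ pairs $(k,n)$. Your bookkeeping is more explicit than the paper's (you get $c=1/\sqrt2$ directly, and you spell out the stack-of-rewards argument for adaptive sampling, which the paper only asserts in the lemma statement), but the approach is the same.
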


\begin{proof}
Fix $k$ and $n$, and write
$\widehat\mu_N=\frac1n\sum_{i=1}^n N_{k,i}$,
$\widehat\mu_A=\frac1n\sum_{i=1}^n A_{k,i}$.
Hoeffding's inequality yields, for any $\delta'>0$, with probability at least $1-\delta'$,
\begin{equation}
\begin{aligned}
    |\widehat\mu_N-\mu_N(k)|&\le N_{\max}\sqrt{\frac{\log(2/\delta')}{2n}},\\
    |\widehat\mu_A-\mu_A(k)|&\le A_{\max}\sqrt{\frac{\log(2/\delta')}{2n}}.
\end{aligned}
\end{equation}
Since $A_{k,i}\ge B_{\min}$ almost surely, also $\widehat\mu_A\ge B_{\min}$ and $\mu_A(k)\ge B_{\min}$.
Therefore
\begin{equation}
\begin{aligned}
    \left|\frac{\widehat\mu_N}{\widehat\mu_A}-\frac{\mu_N(k)}{\mu_A(k)}\right|
    &=\left|\frac{\widehat\mu_N\mu_A(k)-\mu_N(k)\widehat\mu_A}{\widehat\mu_A\mu_A(k)}\right|\\
    &\le \frac{|\widehat\mu_N-\mu_N(k)|}{B_{\min}}
    +\frac{\mu_N(k)\,|\widehat\mu_A-\mu_A(k)|}{B_{\min}^2}\\
    &\le \frac{|\widehat\mu_N-\mu_N(k)|}{B_{\min}}
    +\frac{N_{\max}\,|\widehat\mu_A-\mu_A(k)|}{B_{\min}^2}.
\end{aligned}
\end{equation}
Combining with the two Hoeffding bounds and absorbing constants into $c$ yields
$|\widehat C_{k,n}-C(k)|\le c L_{\max}\sqrt{\log(1/\delta')/n}$.
Taking a union bound over $k\in\{1,\ldots,K_{\max}\}$ and $n\in\{1,\ldots,T\}$ with
$\delta'=\delta/(2K_{\max}T)$ gives~\eqref{eq:ratio_conc} after adjusting constants.
\end{proof}

\begin{theorem}[Regret Bound of UCB-SpecStop]
\label{thm:regret}
Assume Assumption~\ref{asm:bounded} and let $c>0$ be the universal constant in
Lemma~\ref{lem:ratio_concentration}. Run Algorithm~\ref{alg:ucb} with
\begin{equation}
\label{eq:ucb_index_scale}
    \beta \ge c,
    \qquad
    w_k(t)\triangleq L_{\max}\sqrt{\frac{\log(4K_{\max}T^2)}{T_k(t)}}.
\end{equation}
Let $\Delta_k=C(k)-C(k^\star)$ for $k^\star\in\arg\min_k C(k)$.
Then the expected regret satisfies
\begin{equation}
\label{eq:regret_gap_dep}
    \mathbb{E}[R(T)]
    =O\!\left(
    \sum_{k:\Delta_k>0}
    \frac{L_{\max}^2\log(K_{\max}T)}{\Delta_k}
    \right)
\end{equation}
and
\begin{equation}
\label{eq:regret_gap_free}
    \mathbb{E}[R(T)]
    =O\!\left(
    L_{\max}\sqrt{K_{\max}T\log(K_{\max}T)}
    \right).
\end{equation}
\end{theorem}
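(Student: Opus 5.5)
The argument follows the standard UCB1 template of Auer et al., using Lemma~\ref{lem:ratio_concentration} in place of Hoeffding's inequality for sample means. Apply the lemma with $\delta = 1/T$, so that $\log(4K_{\max}T/\delta) = \log(4K_{\max}T^2)$; this matches the bonus $w_k(t)$ in~\eqref{eq:ucb_index_scale}. Let $\mathcal{E}$ be the event that, for every arm $k$ and every count $1\le n\le T$, $|\widehat C_{k,n}-C(k)|\le c\,L_{\max}\sqrt{\log(4K_{\max}T^2)/n}$. The lemma gives $\Pr(\mathcal{E}^c)\le 1/T$, and this bound is valid under adaptive sampling. Since each per-round gap satisfies $\Delta_k\le N_{\max}/B_{\min}\le L_{\max}$, the contribution of $\mathcal{E}^c$ to $\mathbb{E}[R(T)]$ is at most $T\cdot L_{\max}\cdot(1/T)=L_{\max}$. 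This is a lower-order term in both bounds.

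On $\mathcal{E}$, since $\beta\ge c$, the index $\widehat C_k-\beta w_k$ lower-bounds $C(k)$ for every arm. Fix a suboptimal arm $k$ and suppose it is selected at round $t>K_{\max}$. Optimism gives
\[
C(k^\star)\ge \widehat C_{k^\star}-\beta w_{k^\star}\ge \widehat C_k-\beta w_k\ge C(k)-(c+\beta)w_k(t).
\]
Hence $\Delta_k\le 2\beta w_k(t)$, and therefore $T_k(t)\le 4\beta^2L_{\max}^2\log(4K_{\max}T^2)/\Delta_k^2$. Consequently, on $\mathcal{E}$ the total number of pulls of arm $k$ is at most $1+4\beta^2L_{\max}^2\log(4K_{\max}T^2)/\Delta_k^2$. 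Multiplying by $\Delta_k$, summing over suboptimal arms, and adding the $L_{\max}$ contribution of $\mathcal{E}^c$ gives~\eqref{eq:regret_gap_dep}. Here $\beta$ is treated as a constant and $\log(4K_{\max}T^2)=O(\log(K_{\max}T))$. The extra $\sum_k\Delta_k\le K_{\max}L_{\max}$ from the initialization pulls is also lower order.

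For the gap-free bound I will use the pull-count bound directly rather than a threshold argument. On $\mathcal{E}$, each suboptimal arm satisfies $\Delta_k\le 2\beta L_{\max}\sqrt{\log(4K_{\max}T^2)/(T_k(T)-1)}$. Therefore
\[
\Delta_k T_k(T)\le \Delta_k+2\beta L_{\max}\sqrt{\log(4K_{\max}T^2)\,T_k(T)}.
\]
Summing over arms and applying Cauchy--Schwarz with $\sum_kT_k(T)=T$ gives $R(T)=O\bigl(L_{\max}\sqrt{K_{\max}T\log(K_{\max}T)}\bigr)+K_{\max}L_{\max}$. The last term is dominated whenever $T\ge K_{\max}$. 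For $T<K_{\max}$ the bound holds trivially, since $R(T)\le TL_{\max}\le L_{\max}\sqrt{K_{\max}T}$. Together with the $\mathcal{E}^c$ contribution this yields~\eqref{eq:regret_gap_free}.

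The main obstacle is the bookkeeping around the concentration event. The confidence width must be expressed in terms of the random count $T_k(t)$, not a deterministic $n$. The uniform-in-$n$ union bound in Lemma~\ref{lem:ratio_concentration} is exactly what permits this, so I will check carefully that the $\delta$ chosen there yields the $\log(4K_{\max}T^2)$ in the algorithm. I will also check that the bound $\Delta_k\le L_{\max}$ on $\mathcal{E}^c$ is legitimate, which it is because $C(k)\in[0,N_{\max}/B_{\min}]$. Everything else is the routine UCB1 calculation.
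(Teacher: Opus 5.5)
Your proof is correct, and its skeleton matches the paper's: a uniform good event from Lemma~\ref{lem:ratio_concentration}, optimism via $\beta\ge c$, the pull-count bound $T_k(t)\le 4\beta^2L_{\max}^2\log(4K_{\max}T^2)/\Delta_k^2$, and a crude bound on $\mathcal{E}^c$. Two details differ.

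First, the choice of $\delta$. The paper applies the lemma with $\delta=T^{-2}$. Taken literally, that yields width $c\,L_{\max}\sqrt{\log(4K_{\max}T^3)/n}$, not the $\log(4K_{\max}T^2)$ that appears in its event $\mathcal{E}$ and in the algorithm. So that step implicitly needs $\beta$ a bit larger than $c$ (e.g.\ $\beta\ge c\sqrt{3/2}$). Your choice $\delta=1/T$ matches the bonus exactly and works for every $\beta\ge c$. The cost is an $\mathcal{E}^c$ contribution of $L_{\max}$ instead of the paper's $O(1)$, which is still lower order in both bounds.

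Second, the gap-free bound. The paper splits arms at a threshold $\varepsilon$, reuses the gap-dependent pull count for arms with $\Delta_k>\varepsilon$, and then optimizes $\varepsilon$. You instead turn the last-pull condition into $\Delta_k T_k(T)\le\Delta_k+2\beta L_{\max}\sqrt{\log(4K_{\max}T^2)\,T_k(T)}$ and apply Cauchy--Schwarz with $\sum_k T_k(T)=T$. Both give the same rate. The threshold argument is more modular, since it uses the gap-dependent bound as a black box. Yours avoids tuning $\varepsilon$ and exposes explicit constants. It also accounts for the initialization pulls and the regime $T<K_{\max}$, which the paper leaves implicit.
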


\begin{proof}
Apply Lemma~\ref{lem:ratio_concentration} with $\delta=T^{-2}$. Let $\mathcal{E}$
be the event that for all arms $k$ and all $1\le n\le T$,
$|\widehat C_{k,n}-C(k)|\le \beta L_{\max}\sqrt{\log(4K_{\max}T^2)/n}$.
Then $\mathbb{P}(\mathcal{E})\ge 1-O(T^{-2})$.

On $\mathcal{E}$, whenever a suboptimal $k$ is chosen at time $t$ with $T_k(t)\ge 1$,
the index ordering gives
$\widehat C_{k,t}-\beta w_k(t)\le \widehat C_{k^\star,t}-\beta w_{k^\star}(t)$,
where $\widehat C_{k,t}=S_N(k)/S_A(k)$ after $T_k(t)$ pulls.
Using the concentration bounds on both sides yields
$C(k)-2\beta w_k(t)\le C(k^\star)$, hence
$\Delta_k\le 2\beta w_k(t)$ and therefore
$T_k(t)\le 4\beta^2L_{\max}^2\log(4K_{\max}T^2)/\Delta_k^2$.
Summing regret contributions $\Delta_k T_k(T)$ over $k$ yields the gap-dependent
bound on $\mathcal{E}$.

Per-round regret is $O(N_{\max}/B_{\min})$ while $\mathbb{P}(\mathcal{E}^c)=O(T^{-2})$,
so the failure event contributes only $O(1)$ to $\mathbb{E}[R(T)]$.
This establishes the gap-dependent expected regret bound.

For the gap-free bound, fix $\varepsilon>0$.
Arms with $\Delta_k\le\varepsilon$ contribute at most $T\varepsilon$;
arms with $\Delta_k>\varepsilon$ contribute at most
$O\bigl(K_{\max}L_{\max}^2\log(K_{\max}T)/\varepsilon\bigr)$ on $\mathcal{E}$.
Choosing
$\varepsilon=L_{\max}\sqrt{K_{\max}\log(K_{\max}T)/T}$
balances the two terms and yields the stated
$O\bigl(L_{\max}\sqrt{K_{\max}T\log(K_{\max}T)}\bigr)$ bound on $\mathbb{E}[R(T)]$.
\end{proof}

The gap-free bound carries a leading
constant proportional to $L_{\max}$ (hence to $N_{\max}/B_{\min}$ up to
an additional $N_{\max}A_{\max}/B_{\min}^2$ term from denominator fluctuations).
This scale captures the
``signal-to-noise'' difficulty of the problem: large $N_{\max}$
(high-delay environments) increases cost variance, while small
$B_{\min}$ (low acceptance) reduces the information gained per round.
For typical ranges ($K_{\max} \leq 20$, $D_{\max}$ on the order of
tens to hundreds of ms), $N_{\max}$ and $L_{\max}$ remain moderate.

The dependence on $K_{\max}$ reflects the cost of exploring all arms.
However, Theorem~\ref{thm:phase_transition} implies that only
$O(\log d / \log(1/\alpha))$ arms are competitive in high-delay
regimes, and the remaining arms have large gaps $\Delta_k$ and are
quickly eliminated. This structural insight opens the door to tighter
instance-dependent analyses that exploit the unimodal structure of
$C(k,d)$, potentially reducing the effective arm count from $K_{\max}$
to $O(\log d)$.

\subsection{Contextual Extension}
\label{subsec:contextual}

When the network state $s_t$ is observable (e.g., through RTT
measurements or congestion indicators), the agent can leverage this
side information to accelerate learning. The key idea is simple:
maintain independent statistics per state, effectively running
$|\mathcal{S}|$ parallel bandit instances. Proposition~\ref{thm:markov}
motivates maintaining state-dependent statistics, since the optimal arm
may vary with the network state, and sharing statistics across states
would conflate different per-state objectives.

We maintain independent $(S_N, S_A, T)$ statistics per $(k,s)$ and run
the same lower-index rule within the active state:

\begin{algorithm}[t]
\caption{Contextual UCB-SpecStop}
\label{alg:contextual}
\begin{algorithmic}[1]
\REQUIRE State space $\mathcal{S}$, arms $\mathcal{K}$, horizon $T$,
confidence parameter $\beta>0$ (Corollary~\ref{cor:contextual_regret} assumes
$\beta\ge c$ with $c$ from Lemma~\ref{lem:ratio_concentration})
\ENSURE State-dependent optimal policy $\hat{k}^*(s)$ for all $s \in \mathcal{S}$
\STATE \textbf{Initialize:} $S_N(k,s) \leftarrow 0$, 
$S_A(k,s) \leftarrow 0$, $T_{k,s} \leftarrow 0$ for all $(k,s)$
\FOR{$t = 1, 2, \ldots, T$}
    \STATE Observe $s_t$
    \IF{some $k$ has $T_{k,s_t} = 0$}
        \STATE $k_t \leftarrow$ any such $k$
    \ELSE
        \STATE $k_t \leftarrow \arg\min_{k} 
        \bigl[S_N(k,s_t)/S_A(k,s_t) - 
        \beta\, L_{\max}\sqrt{\log(4|\mathcal{S}|K_{\max}T^2)/T_{k,s_t}}\bigr]$
    \ENDIF
    \STATE Play $k_t$; observe $N_t, A_t$; update 
    $(S_N,S_A,T)_{k_t,s_t}$
\ENDFOR
\RETURN $\hat{k}^*(s) \leftarrow \arg\min_{k} S_N(k,s)/S_A(k,s)$, $\forall s \in \mathcal{S}$
\end{algorithmic}
\end{algorithm}

We assume the context sequence $\{s_t\}$ is exogenous: conditional on the past,
the law of future contexts does not depend on the arm choices
$\{k_u\}_{u\le t}$ beyond the realized current context $s_t$ (equivalently,
$s_{t+1},s_{t+2},\ldots$ are not shifted by switching $k_t$ while holding
$s_t$ fixed). This rules out feedback where long drafts materially alter future
congestion in a way not captured by the observed state; without it, per-state
bandit decompositions need not be valid.

\begin{corollary}[Contextual UCB-SpecStop]
\label{cor:contextual_regret}
Assume that at each round $t$, the context $s_t\in\mathcal{S}$ is observed before
choosing $k_t$. Conditional on $s_t=s$ and $k_t=k$, the observation $(N_t,A_t)$ is
drawn from a fixed distribution $P_{s,k}$, independently of the past.
Assume moreover $\beta\ge c$, where $c$ is the constant in
Lemma~\ref{lem:ratio_concentration} (as in Theorem~\ref{thm:regret}).
Define
\begin{equation}
\label{eq:contextual_Cs}
\begin{aligned}
    C_s(k)&\triangleq
    \frac{\mathbb{E}[N_t\mid s_t=s,k_t=k]}{\mathbb{E}[A_t\mid s_t=s,k_t=k]},\\
    k^\star(s)&\in\arg\min_{k\in\mathcal{K}} C_s(k),
\end{aligned}
\end{equation}
and $\Delta_{k,s}\triangleq C_s(k)-C_s(k^\star(s))$.
Then contextual UCB-SpecStop (Algorithm~\ref{alg:contextual}), run independently in
each state $s$, satisfies
\begin{equation}
\label{eq:contextual_regret_gap}
    \mathbb{E}[R(T)]
    =O\!\left(
    \sum_{s\in\mathcal{S}}
    \sum_{k:\Delta_{k,s}>0}
    \frac{L_{\max}^2\log(|\mathcal{S}|K_{\max}T)}{\Delta_{k,s}}
    \right)
\end{equation}
and
\begin{equation}
\label{eq:contextual_regret_free}
    \mathbb{E}[R(T)]
    =O\!\left(
    L_{\max}\sqrt{|\mathcal{S}|\,K_{\max}\,T\log(|\mathcal{S}|K_{\max}T)}
    \right),
\end{equation}
relative to playing $k^\star(s_t)$ every round.
\end{corollary}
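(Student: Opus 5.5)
The plan is to reduce the contextual problem to $|\mathcal{S}|$ independent copies of the non-contextual bandit and reuse the proof of Theorem~\ref{thm:regret} within each state. Decompose the regret by state:
\[
R(T)=\sum_{s\in\mathcal{S}}\sum_{t:s_t=s}\bigl(C_s(k_t)-C_s(k^\star(s))\bigr)=\sum_{s}\sum_{k:\Delta_{k,s}>0}\Delta_{k,s}\,T_{k,s}(T).
\]
Here $T_{k,s}(T)$ counts the rounds with $s_t=s$ and $k_t=k$. Algorithm~\ref{alg:contextual} touches the statistics of $(k,s)$ only on rounds with $s_t=s$. By the conditional-independence and exogeneity assumptions, the observations collected on successive pulls of $(k,s)$ are the first $n$ elements of an i.i.d.\ sequence from $P_{s,k}$. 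This is the same ``stack of rewards'' argument used after Lemma~\ref{lem:ratio_concentration}.

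First I apply Lemma~\ref{lem:ratio_concentration} to each pair $(k,s)$, taking a union bound over the $|\mathcal{S}|K_{\max}$ pairs and over $n\le T$ with $\delta=T^{-2}$. This gives an event $\mathcal{E}$ with $\mathbb{P}(\mathcal{E}^c)=O(T^{-2})$. On $\mathcal{E}$,
\[
|\widehat C_{k,s,n}-C_s(k)|\le\beta L_{\max}\sqrt{\log(4|\mathcal{S}|K_{\max}T^2)/n}.
\]
This is exactly the width used in Line~7 of the algorithm. The constants $N_{\max},A_{\max},B_{\min}$, and hence $L_{\max}$, are state-independent, since Assumption~\ref{asm:bounded} holds uniformly.

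Next I repeat the index-comparison step of Theorem~\ref{thm:regret} within a fixed state $s$. If a suboptimal $k$ is chosen at a round with $s_t=s$ after initialization, then on $\mathcal{E}$ we get $\Delta_{k,s}\le 2\beta w_{k,s}$. It follows that
\[
T_{k,s}(T)\le 1+4\beta^2L_{\max}^2\log(4|\mathcal{S}|K_{\max}T^2)/\Delta_{k,s}^2.
\]
The extra $1$ accounts for the forced initial pull. Summing $\Delta_{k,s}T_{k,s}$ gives the gap-dependent bound~\eqref{eq:contextual_regret_gap}. The initialization terms contribute $O(|\mathcal{S}|K_{\max}N_{\max}/B_{\min})$, which is absorbed. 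The failure event contributes $T\cdot O(N_{\max}/B_{\min})\cdot O(T^{-2})=O(1)$.

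For the gap-free bound~\eqref{eq:contextual_regret_free}, I split pairs by threshold $\varepsilon$. Pairs with $\Delta_{k,s}\le\varepsilon$ cost at most $\varepsilon T$ in total, since $\sum_{s,k}T_{k,s}\le T$. Pairs with $\Delta_{k,s}>\varepsilon$ cost $O(|\mathcal{S}|K_{\max}L_{\max}^2\log(|\mathcal{S}|K_{\max}T)/\varepsilon)$. Choosing $\varepsilon=L_{\max}\sqrt{|\mathcal{S}|K_{\max}\log(|\mathcal{S}|K_{\max}T)/T}$ balances the two. A sharper per-state Cauchy--Schwarz split using $T^{(s)}=\#\{t:s_t=s\}$ gives the same order, because $\sum_s\sqrt{T^{(s)}}\le\sqrt{|\mathcal{S}|T}$.

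The main obstacle is justifying the i.i.d.\ reduction under random, exogenous context arrivals. The pull times of $(k,s)$ depend on both the context process and the algorithm's past choices. I would handle this with the standard skeleton construction. Predraw, for each $(k,s)$, an i.i.d.\ sequence from $P_{s,k}$, and let the $n$-th pull of $(k,s)$ reveal its $n$-th element. Exogeneity together with the stated conditional independence makes this coupling equal in law to the actual process. The uniform-in-$n$ bound then holds on the skeleton regardless of the (random) pull schedule.
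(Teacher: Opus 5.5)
Your proposal is correct and follows essentially the paper's route: restrict to the rounds with $s_t=s$, apply the ratio-concentration lemma with a union bound over states and arms, and rerun the index-comparison and $\varepsilon$-splitting (or per-state Cauchy--Schwarz) arguments of Theorem~\ref{thm:regret} within each state. The one difference is that you work with a single global good event $\mathcal{E}$ and a global $\varepsilon$, whereas the paper applies the theorem as a black box per state with the random horizon $T_s$. Your version is a slightly cleaner way to make that step rigorous.
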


\begin{proof}
For each $s$, restrict to rounds with $s_t=s$. Conditional on this subsequence,
the observations for each arm $k$ satisfy the same boundedness and i.i.d.\ pull
subsequence structure as in Lemma~\ref{lem:ratio_concentration}.
Apply Theorem~\ref{thm:regret} with horizon $T_s\le T$ and take a union bound over
$s\in\mathcal{S}$ (adjusting the logarithmic factor to $\log(|\mathcal{S}|K_{\max}T)$).
For the gap-free bound, apply the gap-free part of Theorem~\ref{thm:regret} per state
with pull counts $T_s$ satisfying $\sum_s T_s=T$, then use
$\sum_s \sqrt{T_s}\le \sqrt{|\mathcal{S}|\,T}$ (Cauchy--Schwarz).
\end{proof}

\begin{figure}
    \centering
    \includegraphics[width=1\columnwidth]{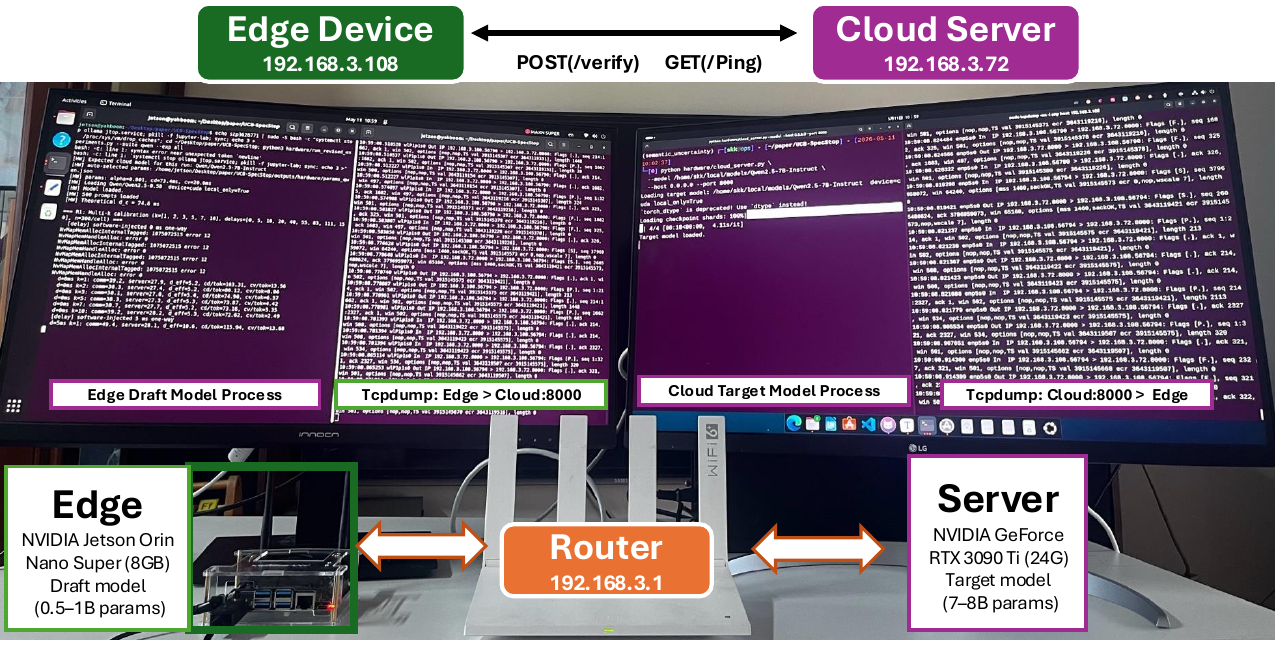}
    \caption{Experimental setup. The edge device (NVIDIA Jetson Orin Nano Super, 8 GB) hosts a draft model (0.5-1B params) and communicates with a cloud server (NVIDIA RTX 3090 Ti, 24 GB) running the target model (7-8B params) via a local router. The screens show (left to right) the edge draft process, outbound tcpdump trace, cloud verification process, and inbound tcpdump trace. The edge and cloud interact through POST(/verify) and GET(/Ping) HTTP endpoints.}
    \label{fig:exp_setup}
\end{figure}

In particular, if all positive gaps satisfy $\Delta_{k,s}\ge \Delta_{\min}>0$, then
\begin{equation}
\label{eq:contextual_regret_uniform_gap}
    \mathbb{E}[R(T)] =
    O\!\left(
    |\mathcal{S}|\,K_{\max}\,L_{\max}^2\log(|\mathcal{S}|K_{\max}T)/\Delta_{\min}
    \right).
\end{equation}

The multiplicative factor $|\mathcal{S}|$ reflects the cost of
learning a separate optimal arm in each state. In practice,
$|\mathcal{S}|$ is small (e.g., $|\mathcal{S}| = 2$ for a good/bad
channel model), so the overhead is modest. Furthermore, the structural monotonicity of the optimal stopping thresholds in
$s$ from Proposition~\ref{thm:markov} could be exploited to share information
across adjacent states, though we leave this refinement for future work.

Theorem~\ref{thm:voi} provides a principled criterion: if the delay
variance is small (all states yield the same $k^*$), the non-contextual
Algorithm~\ref{alg:ucb} suffices and converges faster (no state
splitting). If the variance is large enough for states to straddle the
phase transition $d_c$, contextual Algorithm~\ref{alg:contextual}
captures the VOI and achieves lower asymptotic cost. In practice, one
can start with the non-contextual version and switch to contextual when
observed delay variance exceeds a threshold derived from~\eqref{eq:dc}.

In summary, UCB-SpecStop addresses the practical challenge of unknown
parameters by combining a ratio-of-sums estimator (aligned with the
true objective~\eqref{eq:arm_cost}) with UCB-style exploration. The
$O\!\bigl(L_{\max}\sqrt{K_{\max} T \log(K_{\max} T)}\bigr)$ \emph{expected} regret guarantee ensures that the
cumulative cost of learning vanishes relative to the horizon $T$,
while the contextual extension leverages network state information to
match the performance of the oracle state-dependent policy from
Proposition~\ref{thm:markov}. This completes our answer to \textbf{Q3}.
The next section validates both the theoretical predictions and the
online algorithm through hardware measurements.

\section{Experimental Validation}\label{sec:experiments}

We validate the main theoretical predictions and UCB-SpecStop on a real
edge and cloud testbed using a six-round protocol R1--R6: per-arm cost
calibration (R1), empirical acceptance profiling (R2), phase-transition
sweeping (R3), fixed-delay strategy comparison (R4), online regret at
near-critical delay (R5, including running-cost convergence and
$\beta$-sensitivity analyses), and Markov-channel value of information (R6).
The subsections follow this pipeline so that measurement, modeling, and
learning claims are traceable to the same traces and configuration files.
The structural theorems are proved under constant per-token costs
$(c_d,c_v)$; in the testbed, batching induces mild $k$-dependent costs
(Table~\ref{tab:exp_calib}), so we report both an idealized theory oracle using
averaged costs and a calibrated oracle using per-$k$ costs.

\subsection{Testbed, Protocol, and Metrics}\label{subsec:exp_setup}

\begin{figure*}[t]
    \centering
    \subfloat[Qwen]{\includegraphics[width=0.48\textwidth]{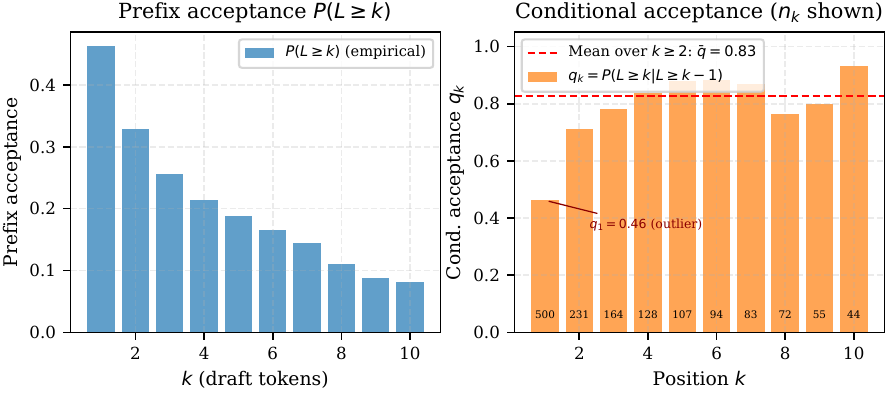}}
    \hfill
    \subfloat[LLaMA]{\includegraphics[width=0.48\textwidth]{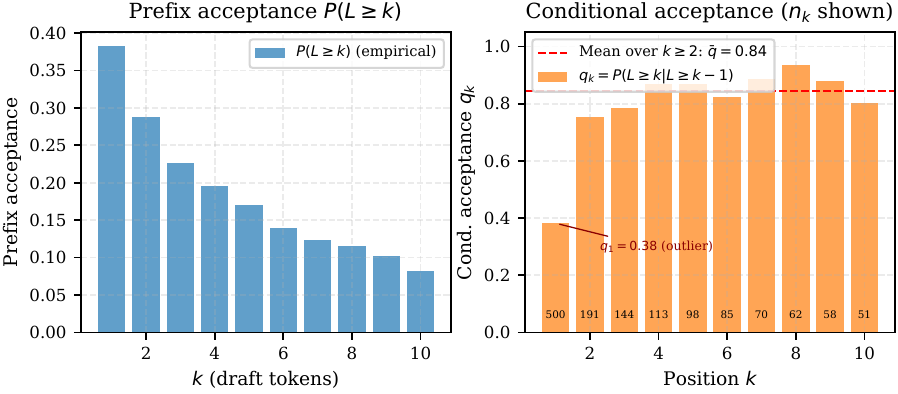}}
    \caption{Per-position acceptance. Left: $\hat q(k)=\Pr[L \ge k]$.
    Right: conditional $\Pr[L \ge k \mid L \ge k{-}1]$ and fitted $\alpha_\mathrm{geo}$.}
    \label{fig:exp_r2}
    \end{figure*}
\begin{table}[t]
    \centering
    \caption{Calibrated per-token costs (ms/token).}
    \label{tab:exp_calib}
    \resizebox{\columnwidth}{!}{%
    \setlength{\tabcolsep}{4pt}
    \footnotesize
    \begin{tabular}{@{}l r r r r r r r r r@{}}
    \toprule
    Suite & $\bar c_d$ & $\bar c_v$ & $c_d(k{=}1)$ & $c_d(k{=}5)$ & $c_d(k{=}10)$
    & $c_v(k{=}1)$ & $c_v(k{=}5)$ & $c_v(k{=}10)$ & RTT$_\mathrm{base}$ (ms)\\
    \midrule
    Qwen  & 85.14 & 8.09 & 106.25 & 79.46 & 73.70 & 16.56 & 5.50 & 3.06 & 10.01\\
    LLaMA & 67.37 & 8.37 & 90.40 & 58.94 & 52.59 & 17.18 & 5.78 & 3.12 & 9.02\\
    \bottomrule
    \end{tabular}%
    }
    \end{table}
    
    \begin{table}[t]
        \centering
        \caption{Per-position acceptance summary.}
        \label{tab:exp_accept}
        \begin{tabular}{@{}l r r r r r r@{}}
        \toprule
        Suite & $\hat q(1)$ & $\hat q(3)$ & $\hat q(5)$ & $\hat q(7)$ & $\hat q(10)$ & $\alpha_\mathrm{geo}$ ($k{\ge}2$)\\
        \midrule
        Qwen  & 0.462 & 0.256 & 0.188 & 0.144 & 0.082 & 0.828\\
        LLaMA & 0.382 & 0.226 & 0.170 & 0.124 & 0.082 & 0.845\\
        \bottomrule
        \end{tabular}
        \end{table}
The edge runs on an NVIDIA Jetson Orin Nano Super (8G) executing the
draft model; the cloud runs on an NVIDIA RTX~3090 Ti (24G) executing the target model.
The two nodes communicate over the LAN with controllable one-way delay injected at the cloud
interface via Linux \texttt{tc netem}. Bare-metal baseline RTTs measured during
calibration (Round~R1) are $10.01$\,ms (Qwen suite) and $9.02$\,ms (LLaMA
suite), consistent with a lightly loaded gigabit link.
We evaluate two draft and target pairs:
(i)~\textit{Qwen suite}, draft \texttt{Qwen/Qwen2.5-0.5B}, target
\texttt{Qwen/Qwen2.5-7B-Instruct};
(ii)~\textit{LLaMA suite}, draft \texttt{Llama-3.2-1B-Instruct}, target
\texttt{meta-llama/Llama-3.1-8B-Instruct}.
Per-token costs $c_d,c_v$ and acceptance profiles are calibrated in
\S\ref{subsec:exp_r1_r2}. The code is available at GitHub \footnote{https://github.com/szpsunkk/UCB-SpecStop}.

All cross-strategy comparisons use paired-prompt replay with deterministic
\texttt{verify\_seed = base + prompt\_id}. This design ensures prompt order and
verification randomness are aligned across strategies, so the measured gap is
driven only by strategy decisions.
Calibration artifacts are chained through a shared
\texttt{calibrated\_state.json}: R1 writes per-$k$ cost measurements,
R2 appends empirical acceptance curves, and R3 appends the per-delay empirical
oracle arm $\hat k^\star(d)$. Downstream rounds (R4 to R6) load these keys at
startup and warn on missing entries to avoid silent fallback to stale defaults.
Round budgets are: R3 uses 300 rounds per $(k,d)$ cell; R4 uses 1,000 rounds
per strategy per delay; R5 runs $T{=}5{,}000$ rounds at near-critical delay;
R6 runs 500 rounds under the Markov channel. 


We report the \emph{ratio-of-sums} per-token latency
$\widehat{C}=\sum_r T_r / \sum_r A_r$ (ms/token), aligned with the arm objective~\eqref{eq:arm_cost}
and the ratio-of-sums analysis in Section~\ref{sec:online}. Cumulative regret $R(t)$ in R5 uses the
offline best-fixed-arm empirical oracle $C^\star(d)$, the minimum ratio-of-sums cost across all
fixed-$k$ arms at the same delay, as the reference.

\subsection{Cost Calibration and Acceptance Profiling}\label{subsec:exp_r1_r2}

Direct measurement calibrates the per-token draft cost $c_d$ and verify cost
$c_v$ at each arm $k \in \{1,2,3,5,7,10\}$. These calibrated constants are used
by the theory/calibrated oracles.
Table~\ref{tab:exp_calib} shows three practical patterns. First, $c_d$
decreases with $k$ in both suites due to batching amortization (Qwen:
$106.25\!\to\!73.70$; LLaMA: $90.40\!\to\!52.59$ ms/token from $k{=}1$ to
$k{=}10$). Second, $c_v$ decreases even faster with $k$ because parallel
verification shares attention computation across positions
(Qwen: $16.56\!\to\!3.06$; LLaMA: $17.18\!\to\!3.12$ ms/token). Third, the
bare-metal RTT values (10.01 ms for Qwen, 9.02 ms for LLaMA) are measured
without injected delay and serve as the fixed LAN baseline; all reported
$d$ values are added by \texttt{tc netem} on top of that baseline. In the main
text, B4 uses averaged $(\bar c_d,\bar c_v)$ while B5 uses per-$k$ calibrated
costs; their close predictions in \S\ref{subsec:exp_r3} indicate that average-cost
approximation is sufficient for first-order tuning.
All reported delays are injected one-way delays. For oracle computation, we use
the effective one-way delay measured from traces,
$d_{\mathrm{eff}}=(\text{comm\_round}-\text{server\_total})/2$, so serialization
and RPC overhead are included consistently in both analysis and evaluation.
We report three oracle variants throughout the experiments:
theory oracle (B4, geometric acceptance with averaged costs),
calibrated-geometric oracle (B5, per-$k$ costs with geometric acceptance),
and offline best-fixed-arm empirical oracle (B6, per-$k$ costs with empirical prefix curve
$\hat B(k)=1+\sum_{i=1}^{k}\hat q(i)$).
The close agreement between B4 and B5 in \S\ref{subsec:exp_r3} suggests that
batching-induced cost variation across $k$ is a second-order effect for arm
selection, even though it still affects absolute latency prediction. By contrast,
the gap between B5 and B6 quantifies the price of the geometric-acceptance
approximation (Assumption~\ref{asm:geometric}): this gap is largest at low delays
where the heavy head $\hat q(1)$ dominates $\hat B(k)$, and shrinks at high
delays where tail terms become the main factor in $\arg\min_k C(k,d)$.
Using $d_{\mathrm{eff}}$ is also essential near the phase boundary. 

Fig.~\ref{fig:exp_r2} shows prefix survival $\hat q(k)=\Pr[L \ge k]$ (left) and
conditional continuation $\Pr[L \ge k \mid L \ge k{-}1]$ (right).
The head is heavy ($\hat q(1)=0.462/0.382$ for Qwen/LLaMA), while for $k\ge 2$ the tail is
near-geometric with $\alpha_\mathrm{geo}=0.828/0.845$, so post-head decay is roughly
constant-ratio---consistent with the geometric-tail abstraction and slow growth of
$k^\star(d)$ past the phase transition.
Table~\ref{tab:exp_accept} lists anchors at $k\in\{1,3,5,7,10\}$ for the calibrated oracle.
A single global $\alpha$ suffices for theory, but deployment should use the full $\hat q(k)$
when building the offline oracle; hence B6 in \S\ref{subsec:exp_r4} beats the pure geometric
oracle on our delay grid.
Surviving prompts drop along the prefix chain
(500$\rightarrow$231$\rightarrow\cdots\rightarrow$44 for Qwen and
500$\rightarrow$191$\rightarrow\cdots\rightarrow$51 for LLaMA up to $k{=}10$), yet
$k{=}10$ still leaves enough samples (44/51) for a stable tail read.
Relative to a pure geometric model, the empirical prefix nudges the low-delay optimum to
slightly larger $k$; at high delay, communication dominates and geometric and empirical
oracles agree, as in \S\ref{subsec:exp_r3}.
Acceptance is measured with deterministic paired-prompt replay
(\texttt{verify\_seed = base + prompt\_id}), so R4 gaps reflect $k$ choice, not verifier noise.
Overall, heavy head plus near-geometric tail justifies Assumption~\ref{asm:geometric} while
motivating empirical-prefix oracles.

\subsection{Phase Transition and Cost Curves}\label{subsec:exp_r3}

\begin{figure}[t]
    \centering
    \subfloat[Qwen]{\includegraphics[width=0.22\textwidth]{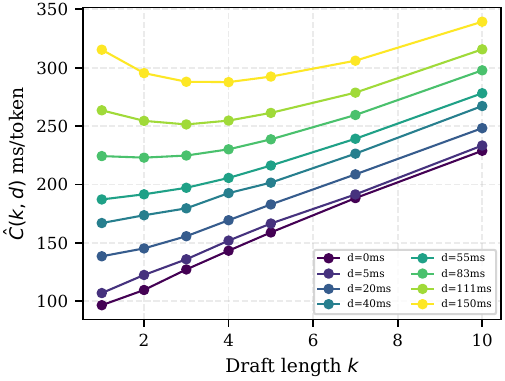}}
    \hfill
    \subfloat[LLaMA]{\includegraphics[width=0.22\textwidth]{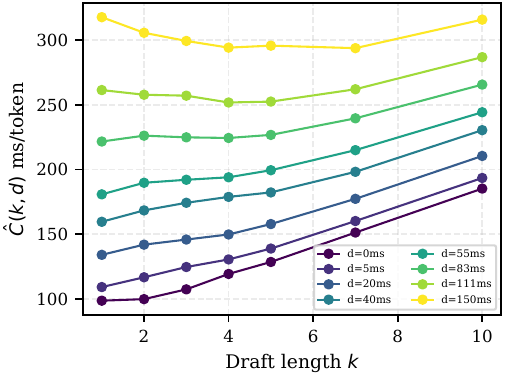}}
    \caption{Per-token cost $\widehat{C}(k,d)$ vs.\ $k$ for $d \in \{0,5,20,40,55,83,111,150\}$\,ms; minima highlighted.}
    \label{fig:exp_r3_cost}
    \end{figure}
    
    \begin{figure}[t]
    \centering
    \subfloat[Qwen]{\includegraphics[width=0.24\textwidth]{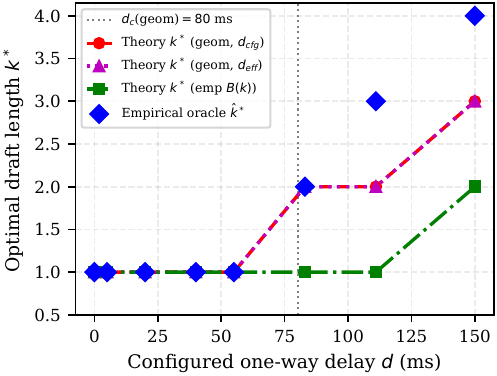}}
    \hfill
    \subfloat[LLaMA]{\includegraphics[width=0.24\textwidth]{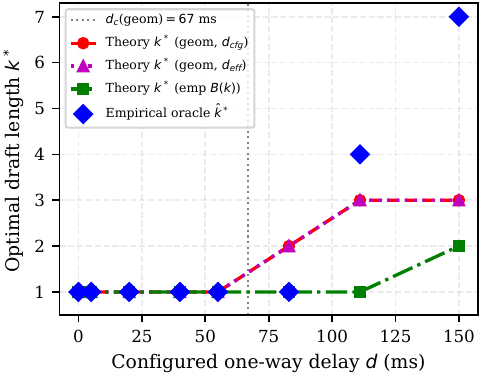}}
    \caption{Phase transition: empirical $\hat k^\star(d)$ (staircase) with geometric, calibrated-geometric, and empirical-prefix oracles.}
    \label{fig:exp_r3_phase}
    \end{figure}

Fig.~\ref{fig:exp_r3_cost} plots the measured per-token cost $\widehat{C}(k,d)$ as a function of the
draft length $k$ over the full one-way delay grid, and Fig.~\ref{fig:exp_r3_phase} reports
the empirical optimum $\hat k^\star(d)$ together with the three theoretical
predictions (geometric, calibrated-geometric, and empirical-prefix oracles).

Three observations match Theorem~\ref{thm:phase_transition} (phase transition and logarithmic scaling)
and the monotonicity result of Theorem~\ref{thm:monotonicity}.
(i)~The cost curves $\widehat{C}(\cdot,d)$ are U-shaped and strictly convex around their empirical minimum for every
$d$ in the grid, so a unique $\hat k^\star(d)$ is well defined.
(ii)~For both suites the sub-critical region $\hat k^\star(d)=1$ covers all small and
moderate delays, and the first jump to $\hat k^\star \ge 2$ occurs at a
suite-specific critical delay: $d_c^{(\mathrm{Qwen})} \approx 83$\,ms and
$d_c^{(\mathrm{LLaMA})} \approx 111$\,ms.
(iii)~Beyond the transition,
$\hat k^\star$ grows slowly: Qwen moves $1 \to 2 \to 3 \to 4$ between
$55$ and $150$\,ms; LLaMA moves $1 \to 4 \to 7$. Both trajectories are
consistent with the $\lceil \log d / \log(1/\alpha) \rceil$ envelope in
Theorem~\ref{thm:phase_transition}.

\begin{table}[t]
\centering
\caption{Hardware-measured phase transition. Bold marks the first delay with $\hat k^\star \ge 2$ (measured $d_c$).}
\label{tab:exp_phase}
\footnotesize
\resizebox{\columnwidth}{!}{%
\begin{tabular}{@{}l cccccccc@{}}
\toprule
$d$ (ms) & 0 & 5 & 20 & 40 & 55 & 83 & 111 & 150\\
\midrule
\textbf{Qwen} $\hat k^\star$ & 1 & 1 & 1 & 1 & 1 & \textbf{2} & 3 & 4\\
\textbf{Qwen} $\widehat{C}(\hat k^\star,d)$ (ms/tok) & 96.62 & 107.02 & 138.52 & 166.98 & 187.19 & 223.00 & 251.45 & 287.68\\
\textbf{Qwen} $k^\star_\mathrm{geom}$ (theory) & 1 & 1 & 1 & 1 & 1 & 2 & 2 & 3\\
\textbf{LLaMA} $\hat k^\star$ & 1 & 1 & 1 & 1 & 1 & 1 & \textbf{4} & 7\\
\textbf{LLaMA} $\widehat{C}(\hat k^\star,d)$ (ms/tok) & 98.73 & 109.23 & 134.12 & 159.65 & 180.82 & 221.63 & 251.76 & 293.72\\
\textbf{LLaMA} $k^\star_\mathrm{geom}$ (theory) & 1 & 1 & 1 & 1 & 1 & 2 & 3 & 3\\
\bottomrule
\end{tabular}}
\end{table}

These results provide three levels of evidence. First, the empirical optimum
is non-decreasing with injected delay, directly supporting
Theorem~\ref{thm:monotonicity}. Second, both suites exhibit a staircase-like
phase-transition pattern, supporting the qualitative structure of
Theorem~\ref{thm:phase_transition}. Third, exact transition locations are
calibration-dependent: the geometric model predicts Qwen well, while for LLaMA
it underestimates the optimal draft length at high delays; this discrepancy is
consistent with the heavy-head acceptance profile in Fig.~\ref{fig:exp_r2} and
is substantially corrected by the empirical-prefix oracle.

\begin{figure*}[t]
    \centering
    \subfloat[Qwen]{\includegraphics[width=0.48\textwidth]{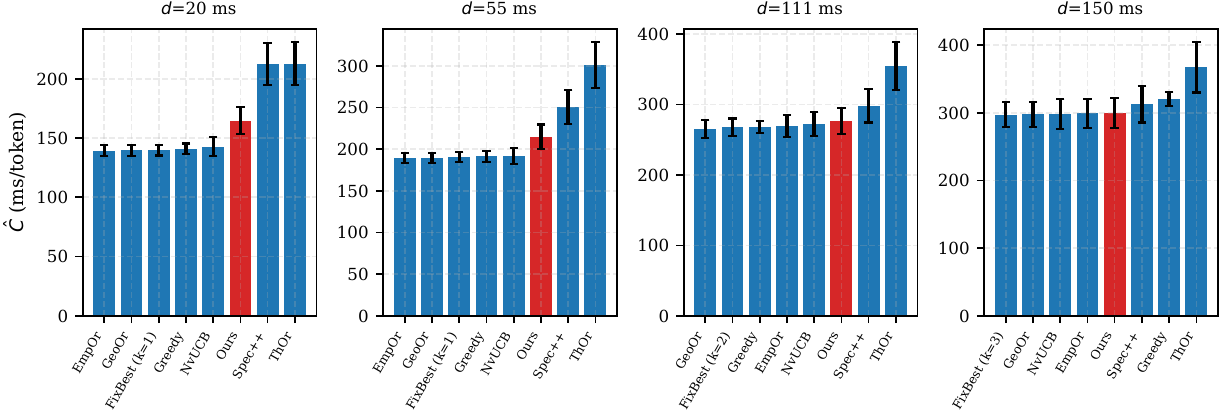}}
    \hfill
    \subfloat[LLaMA]{\includegraphics[width=0.48\textwidth]{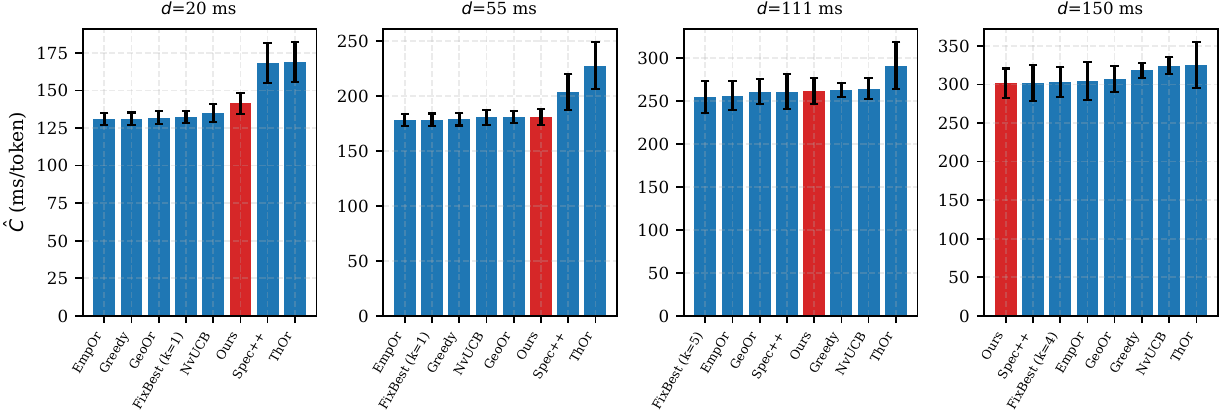}}
    \caption{Strategy comparison at four delays. Grouped bars by strategy; annotations mark the per-delay gap of our algorithm to the offline best-fixed-arm empirical oracle. EmpOr, GeoOr, ThOr, and NvUCB denote the best-fixed empirical oracle, calibrated-geometric oracle, theory oracle, and Naive-UCB baseline, respectively.}
    \label{fig:exp_r4}
    \end{figure*}

\subsection{Strategy Comparison at Fixed Delays}\label{subsec:exp_r4}

Fig.~\ref{fig:exp_r4} compares our algorithm against seven baselines at four delay points
spanning the sub-critical ($20$\,ms), near-critical ($55$\,ms), post-transition
($111$\,ms), and large-delay ($150$\,ms) regimes. The baselines are: (B1)
Fixed-$k$ for $k \in \{1,2,3,4,5,7,10\}$, we report the per-delay
best-fixed; (B2)~Greedy / zero-delay oracle; (B3)~SpecDec++ entropy-threshold
early-exit~\cite{huang2024specdecpp}; (B4)~theory-oracle $k^\star_\mathrm{geom}$; (B5)
calibrated-geometric oracle; (B6)~offline best-fixed-arm empirical oracle $\hat k^\star$; (B7)
Naive-UCB on the biased per-round mean estimator $\mathrm{mean}(T/A)$. Each strategy runs
$1{,}000$ rounds with identical prompt streams and \texttt{tc-netem} delay profiles.

The resulting per-delay picture yields four practical findings.
(1)~UCB-SpecStop matches the offline best-fixed-arm empirical oracle once $d$ approaches $d_c$.
For Qwen the UCB-vs.-empirical-oracle gap is $+13.5\%$ at $d{=}55$\,ms and
collapses to $+2.4\%$ at $111$\,ms and $+0.2\%$ at $150$\,ms. For LLaMA the
gap is $+1.5\%$ at $55$\,ms, $+2.1\%$ at $111$\,ms, and $-1.1\%$ (i.e.,
UCB beats the offline best-fixed-arm empirical oracle) at $150$\,ms by adapting within the run.
(2)~Naive-UCB (B7) incurs a visible tax in high-delay, higher-variance regimes. It trails UCB-SpecStop at
large $d$ in both suites (e.g., LLaMA at $150$\,ms: naive $324.3$ vs.\
ours $301.4$, a $+7.5\%$ overhead), consistent with the Jensen-bias argument
behind our ratio-of-sums estimator.
(3)~Fixed-$k$ is brittle. The best fixed arm at $d{=}20$\,ms ($k{=}1$) is
$14.0$ to $18.7$\% worse than the best fixed arm at $d{=}150$\,ms ($k{=}3$ or
$k{=}4$), so any \emph{a priori} single-$k$ pick misses the operating regime
by a wide margin under delay drift.
(4)~SpecDec++ (B3) loses whenever the channel is dominated by
communication. Its entropy threshold issues short drafts that waste
the network round trip; the gap to the offline best-fixed-arm empirical oracle is $+52\%$ / $+32\%$
for Qwen at $20$/$55$\,ms and $+29\%$ / $+14\%$ for LLaMA, narrowing only
once $d \ge 111$\,ms.

\def\rfouroracleqwen{figure/qwen/fig_r4_with_oracle.pdf}
\def\rfouroraclellama{figure/llama/fig_r4_with_oracle.pdf}
\IfFileExists{figure/qwen/fig_r4_with_oracle.pdf}{}{%
\def\rfouroracleqwen{figure/qwen/fig_r4_strategy_compare.pdf}}
\IfFileExists{figure/llama/fig_r4_with_oracle.pdf}{}{%
\def\rfouroraclellama{figure/llama/fig_r4_strategy_compare.pdf}}

\begin{table*}[t]
\centering
\caption{Per-token latency (ms/tok) by strategy and delay. Bold: lowest ratio-of-sums cost in each column. $\Delta$: our relative gap to the offline best-fixed-arm empirical oracle (negative $=$ better than oracle).}
\label{tab:exp_strategy}
\resizebox{0.9\textwidth}{!}{%
\footnotesize
\begin{tabular}{@{}l r r r r r r r r@{}}
\toprule
Strategy & Qwen 20 & Qwen 55 & Qwen 111 & Qwen 150 & LLaMA 20 & LLaMA 55 & LLaMA 111 & LLaMA 150\\
\midrule
Fixed-$k$ (best) & \textbf{139.84} (k1) & 190.32 (k1) & 267.38 (k2) & \textbf{297.00} (k3) & 132.27 (k1) & \textbf{178.50} (k1) & \textbf{254.41} (k5) & 302.84 (k4)\\
Fixed-$k{=}5$ & 195.40 & 234.11 & 286.80 & 304.48 & 158.37 & 198.79 & 254.41 & 305.17\\
Greedy (B2) & 140.83 & 191.34 & \textbf{268.02} & 320.74 & 131.18 & 178.60 & 263.28 & 318.18\\
SpecDec++ (B3) & 212.23 & 250.18 & 298.33 & 312.80 & 168.24 & 203.56 & 261.10 & \textbf{301.98}\\
Theory-oracle (B4) & 212.91 & 300.55 & 354.46 & 366.98 & 168.71 & 227.36 & 291.41 & 325.23\\
Calib.-oracle (B5) & 139.79 & 189.77 & 264.88 & 297.53 & 131.94 & 180.73 & 261.04 & 307.12\\
\textbf{Best-fixed empirical oracle (B6)} & \textbf{139.38} & \textbf{189.31} & \textbf{269.55} & 299.29 & \textbf{130.81} & 178.02 & 256.53 & 304.63\\
Naive-UCB (B7) & 142.77 & 191.67 & 272.29 & 298.07 & 135.05 & 180.50 & 264.59 & 324.28\\
\textbf{Ours (UCB-SpecStop)} & 164.78 & 214.78 & 276.04 & 299.92 & 141.46 & 180.76 & 261.94 & 301.40\\
$\Delta$ Ours vs.\ B6 & $+18.2$\% & $+13.5$\% & $+2.4$\% & $+0.2$\% & $+8.1$\% & $+1.5$\% & $+2.1$\% & $\mathbf{-1.1}$\%\\
\bottomrule
\end{tabular}%
}
\end{table*}

The $\Delta$ row shows that UCB-SpecStop pays an exploration tax in the
sub-critical region (small $d$) where every arm is close to optimal, but closes to $\le 2.5$\% and
occasionally overtakes the offline best-fixed-arm empirical oracle, past the phase
transition, which is where delay-aware draft-length selection has leverage.

\subsection{Online-Learning Regret}\label{subsec:exp_r5}

Fig.~\ref{fig:exp_r5_regret} plots cumulative regret $R(t)$ on a log scale for both axes for our algorithm
and two bandit baselines, Naive-UCB (biased $\mathrm{mean}(T/A)$) and EXP3
adapted to the ratio objective, at the near-critical delay for each suite
($d{=}83$\,ms Qwen, $d{=}111$\,ms LLaMA). The offline best-fixed-arm empirical oracle $\widehat{C}^\star(d)$ is computed
offline from the R3 cost grid as the minimum ratio-of-sums cost across all fixed-$k$ arms
at the same delay and used as the reference for instantaneous regret
$r_t=\widehat{C}_t-\widehat{C}^\star$. All three methods share the same prompt
stream and the same delay-shaped channel.

Before online learning starts, R5 also runs a same-delay per-arm oracle probe on
the same prompt stream to anchor fixed-$k$ references under identical channel
conditions. This probe is used only for plotting/diagnostics and does not alter
the regret definition above.

Two readings matter.
(i)~UCB-SpecStop and Naive-UCB both show slope close to $\tfrac{1}{2}$ in the two-axis log regret plot, empirically consistent with the gap-free
$O(\sqrt{T\log T})$ scaling of Theorem~\ref{thm:regret} (up to $\log K_{\max}$ factors), whereas EXP3 shows slope near $1$
and accrues about $1.4\times$ (Qwen) to $12\times$ (LLaMA) more regret.
(ii)~Ratio-of-sums vs.\ mean-of-ratios is the more subtle comparison:
UCB-SpecStop and Naive-UCB have similar empirical slopes in this near-critical
low-variance setting, and
their final-round difference is within the 95\% confidence band
($R_T{=}447{,}657$ vs.\ $448{,}805$\,ms for Qwen; $65{,}601$ vs.\ $65{,}651$\,ms for LLaMA).
We therefore interpret Fig.~\ref{fig:exp_r5_regret} as empirical consistency
with the gap-free trend rather than strict method dominance, and keep the estimator
ablation as the principled reason to prefer ratio-of-sums: it is the only
estimator for which the $O(\sqrt{T \log T})$ guarantee is known to hold
(Section~\ref{sec:online}), and it is the one whose performance advantage widens when the
per-arm variance is non-trivial (from the R4 strategy comparison in \S\ref{subsec:exp_r4}:
Qwen at $d{=}111$\,ms, $+1.0$\% naive vs.\
$+2.4$\% ours).

\begin{figure}[t]
\centering
\subfloat[Qwen]{\includegraphics[width=0.24\textwidth]{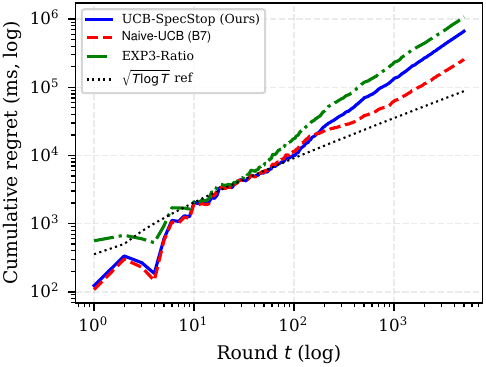}}
\hfill
\subfloat[LLaMA]{\includegraphics[width=0.24\textwidth]{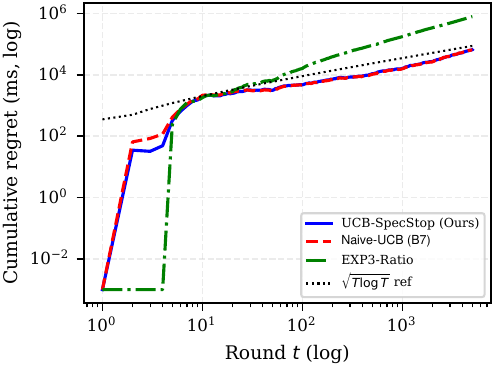}}
\caption{Cumulative regret with logarithmic scales on both axes. Shaded bands are 95\% CI across 30 bootstrap trajectories.}
\label{fig:exp_r5_regret}
\end{figure}

\begin{table}[t]
\centering
\caption{R5 final-round summary ($T{=}5{,}000$).}
\label{tab:exp_r5}
\resizebox{\columnwidth}{!}{%
\begin{tabular}{@{}l r r r r r r r@{}}
\toprule
Suite & $d$ (ms) & $\widehat{C}^\star$ & Ours $R_T$ & Naive $R_T$ & EXP3 $R_T$ & Ours $\widehat{C}$ & Gap\\
\midrule
Qwen  & 83  & 188.06 & 448{,}805 & 447{,}657 & 636{,}564 & 192.01 & $+2.10$\%\\
LLaMA & 111 & 191.43 & 65{,}651  & 65{,}601  & 792{,}963 & 183.00 & $\mathbf{-4.40}$\%\\
\bottomrule
\end{tabular}%
}
\end{table}

The negative oracle gap on LLaMA is measured against an offline best-fixed-arm
empirical oracle, not against a fully adaptive oracle. UCB-SpecStop can mix
arms online, and under finite-sample ratio-of-sums evaluation this dynamic
behavior may outperform any single fixed arm at large delay.
This effect is finite-sample and trace-dependent and does not imply that any
online policy systematically beats a fully adaptive oracle in the large-$T$ limit.




Complementing Fig.~\ref{fig:exp_r5_regret}, Fig.~\ref{fig:exp_r5_conv}
plots running per-token cost across rounds, and Table~\ref{tab:exp_beta}
summarizes offline $\beta$ sensitivity on Qwen R5 logs.
The UCB-SpecStop trajectories approach the empirical-oracle level while dashed
fixed-$k$ references from the same-delay per-arm probe (see \S\ref{subsec:exp_r5})
remain separated, highlighting that online adaptation drives long-run efficiency.
Table~\ref{tab:exp_beta} shows regret is nearly flat for $\beta\in[0.5,2.0]$ with
best mean near $\beta{=}1.5$ and overlapping confidence intervals, so the default
coefficient is not brittle.
During R5, UCB-SpecStop concentrates on the near-critical best arm after early
exploration; running cost therefore enters a near-oracle band by mid-horizon,
while EXP3 remains slower in this low-variance regime.
The LLaMA panel again illustrates that adaptive arm mixing can beat the best
\emph{fixed} arm under the ratio-of-sums objective, which reconciles the
negative oracle gap in Table~\ref{tab:exp_r5}.

\begin{figure}[t]
\centering
\subfloat[Qwen]{\includegraphics[width=0.24\textwidth]{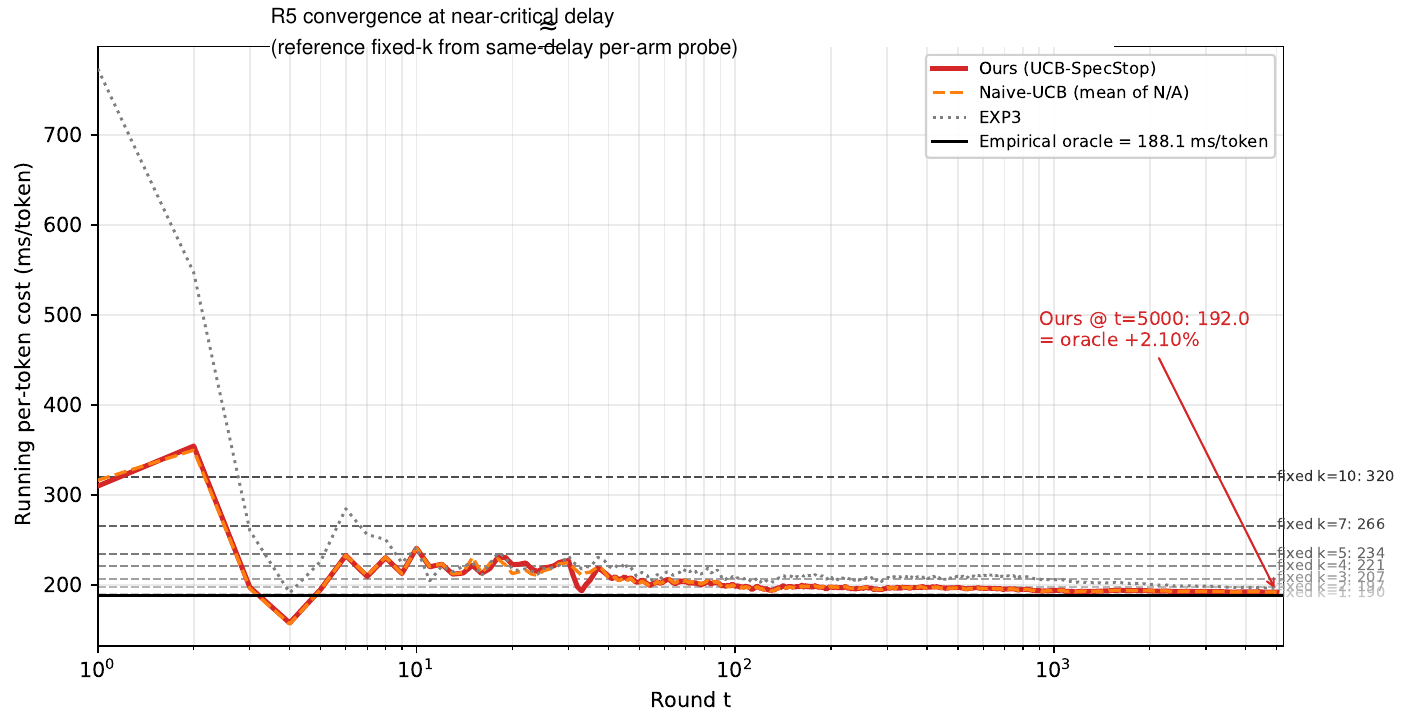}}
\hfill
\subfloat[LLaMA]{\includegraphics[width=0.24\textwidth]{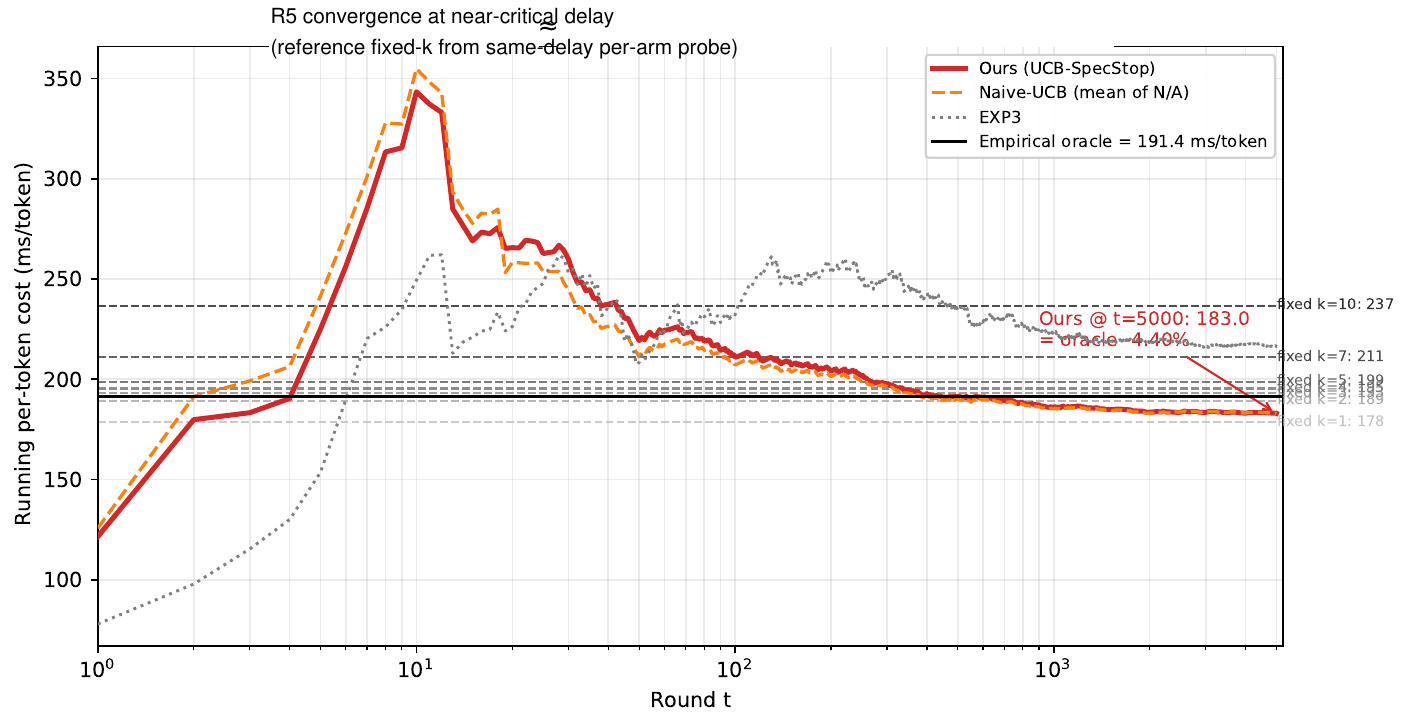}}
\caption{Running cost convergence. Solid black: offline best-fixed-arm empirical oracle.
Dashed greys: fixed-$k$ references from the R5 same-delay per-arm probe at
matched near-critical delays ($d{=}83$\,ms for Qwen, $d{=}111$\,ms for LLaMA).
Red: UCB-SpecStop.}
\label{fig:exp_r5_conv}
\end{figure}

\begin{table}[t]
\centering
\caption{Qwen R5: mean cumulative regret vs.\ $\beta$ (bootstrap).}
\label{tab:exp_beta}
\resizebox{\columnwidth}{!}{%
\footnotesize
\begin{tabular}{@{}l r r r r r r@{}}
\toprule
$\beta$ & 0.3 & 0.5 & 0.7 & 1.0 & 1.5 & 2.0\\
\midrule
$\bar R_T$ (ms) & 818{,}083 & 800{,}867 & 800{,}921 & 789{,}012 & \textbf{788{,}986} & 789{,}046\\
$\pm 95$\% CI & 118{,}825 & 116{,}211 & 116{,}234 & 114{,}960 & 114{,}946 & 114{,}919\\
\bottomrule
\end{tabular}%
}
\end{table}

\subsection{Value of Network-State Information}\label{subsec:exp_r6}

\begin{figure}[t]
    \centering
    \subfloat[Qwen]{\includegraphics[width=0.24\textwidth]{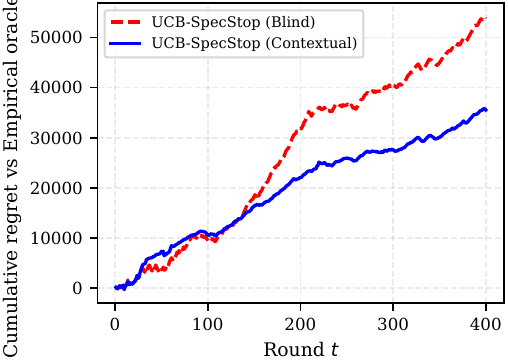}}
    \hfill
    \subfloat[LLaMA]{\includegraphics[width=0.24\textwidth]{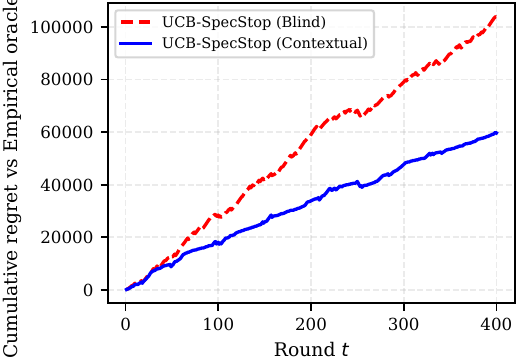}}
    \caption{VOI under Markov good/bad channel:
    contextual UCB-SpecStop vs.\ blind UCB-SpecStop.}
    \label{fig:exp_r6}
    \end{figure}

\begin{table}[t]
    \centering
    \caption{R6 VOI summary ($p_{g\to b}=p_{b\to g}=0.1$).}
    \label{tab:exp_voi}
    \resizebox{\columnwidth}{!}{%
    \footnotesize
    \begin{tabular}{@{}l l l r r r r@{}}
    \toprule
    Suite & $d_g$ / $d_b$ & $k^\star_g$ / $k^\star_b$ & $C_g$ / $C_b$ & Blind $\widehat{C}$ & Ctx.\ $\widehat{C}$ & VOI\\
    \midrule
    Qwen  & 37 / 111 & 1 / 2 & 97.34 / 170.53 & 228.72 & 221.82 & $\mathbf{+3.02}$\%\\
    LLaMA & 27 / 83  & 1 / 2 & 77.24 / 135.24 & 240.32 & 223.95 & $\mathbf{+6.81}$\%\\
    \bottomrule
    \end{tabular}%
    }
    \end{table}

Fig.~\ref{fig:exp_r6} evaluates Theorem~\ref{thm:voi} under a two-state Markov
channel. Contextual UCB-SpecStop outperforms blind UCB-SpecStop in both suites,
with positive VOI summarized in Table~\ref{tab:exp_voi}.
Each panel plots cumulative regret under the same good/bad channel realization,
comparing a context-aware policy (observes the state) to a blind policy
(state-agnostic). The consistent gap indicates that network-state awareness
reduces learning and control error when the channel alternates between
low-delay and high-delay regimes.
Table~\ref{tab:exp_voi} explains the gain: under good-state delay $d_g$ both
suites prefer shorter drafts, while under bad-state delay $d_b$ the optimum is
longer. A blind policy averages these regimes and pays mismatch cost, whereas the
contextual policy tracks state-specific arms, yielding
$+3.02\%$ (Qwen) and $+6.81\%$ (LLaMA) VOI.
The Markov channel uses symmetric transitions $p_{g\to b}=p_{b\to g}=0.1$
(stationary mass $(0.5,0.5)$, expected sojourn length 10 rounds per state).
The selected delay pairs place good and bad states on opposite sides of the
phase-transition neighborhood ($k_g^\star{=}1$, $k_b^\star{=}2$), matching the
regime where Theorem~\ref{thm:voi} predicts strictly positive VOI.
LLaMA exhibits larger VOI than Qwen because its cost curve is steeper around the
transition, so arm mismatch in the bad state is more expensive---a pattern
suggesting that RTT-aware adaptation matters most on high-variance links
(e.g., cellular or satellite paths).

\section{Conclusion}
\label{sec:conclusion}

Distributed LLM inference with edge drafts and cloud targets faces 
stochastic communication delays that change how speculation should be 
tuned. We cast draft-length selection as an optimal stopping problem and 
showed that the optimal policy under deterministic delay is a delay-monotone
threshold rule, that under a bounded speculation horizon $K_{\max}$ and monotone
stopping-region assumptions the Markov-modulated extension is a state-dependent
threshold, that the
optimal draft length scales only \textit{logarithmically} in delay with 
a sharp phase transition at a computable critical delay $d_c$, and that 
UCB-SpecStop with a ratio-of-sums estimator achieves a gap-dependent
logarithmic \emph{expected} regret bound and a gap-free
$O\!\bigl(L_{\max}\sqrt{K_{\max} T \log(K_{\max} T)}\bigr)$ \emph{expected} regret bound without prior
knowledge of the environment. The Markov-channel extension uses a Dinkelbach
transformation~\cite{dinkelbach1967} so that Bellman components share 
consistent units.
Section~\ref{sec:experiments} validates the core analytical predictions on a
Jetson Orin Nano Super and RTX~3090 testbed for Qwen 2.5 and
Llama~3 draft/target pairs using the full R1--R6 protocol and the
ratio-of-sums per-token metric. Measurements show suite-specific phase
transitions, agreement between geometric and empirical-prefix oracles after
calibration (R1--R3), strong comparisons against fixed-$k$ and heuristic
baselines (R4), sublinear regret for UCB-SpecStop at near-critical delay with
stable UCB tuning (R5), and strictly positive VOI for contextual policies on a
Markov good/bad channel (R6), aligning with
Theorems~\ref{thm:phase_transition},~\ref{thm:regret}, and~\ref{thm:voi}.
Together, these rounds underscore that deployment should calibrate costs and
acceptance from traces rather than from the idealized geometric model alone.




\bibliographystyle{IEEEtran}
\bibliography{bib}

\end{document}